\pgfplotsset{compat=1.18}
\newcommand{\changefont}{
	\color{red}\fontsize{9}{9}\selectfont
}
\let\oldmaketitle\maketitle
\renewcommand{\maketitle}{%
	\oldmaketitle
	\thispagestyle{fancy}
}
\newacronym{acf}{ACF}{autocorrelation function}
\newacronym{bs}{BS}{base station}
\newacronym{cl}{CL}{Cram\'er--Lo\`eve}
\newacronym{cs}{CS}{cyclostationary}
\newacronym{csm}{CSM}{cyclic spectrum matrix}
\newacronym{isd}{IS-Div.}{Itakura--Saito divergence}
\newacronym{kl}{KL}{Karhunen--Lo\`eve}
\newacronym{kld}{KL-Div.}{Kullback--Leibler divergence}
\newacronym{ldct}{LDCT}{Lebesgue's Dominated Convergence Theorem}
\newacronym{psd}{PSD}{power spectral density}
\newacronym{wss}{WSS}{wide-sense stationary}
\newacronym{snr}{SNR}{signal-to-noise ratio}
\newacronym{simo}{SIMO}{single input-multiple output}
\newacronym{mimo}{MIMO}{multiple input-multiple output}
\newacronym{fft}{FFT}{fast Fourier transform}
\newacronym{ml}{ML}{maximum likelihood}
\newacronym{llr}{LLR}{log-likelihood ratio}
\newacronym{rsm}{RSM}{Rihaczek spectrum matrix}
\newacronym{ue}{UE}{user equipment}
\newacronym{sswmod}{SSW-model}{spatially stationary Weichselberger model}
\newacronym{csi}{CSI}{channel state information}
\newacronym{pam}{PAM}{pulse-amplitude modulation}
\newcommand{\cov}[1]{\ensuremath{\mathrm{C}_{#1}}}
\newcommand{\infcov}[1]{\ensuremath{\overset{\infty}{\mathrm{C}}_{#1}}}
\newcommand{\covmat}[1]{\ensuremath{\mathbf{C}_{#1}}}
\newcommand{\cyc}[1]{\ensuremath{\mathrm{S}_{#1}}}
\newcommand{\cycmat}[1]{\ensuremath{\mathbf{S}_{#1}}}
\newcommand{\Ur}{\ensuremath{\mathbf{U}_{\mathrm{R}}}}
\newcommand{\Ut}{\ensuremath{\mathbf{U}_{\mathrm{T}}}}
\newcommand{\BS}{\ensuremath{\mathrm{BS}}}
\newcommand{\snr}{\ensuremath{\mathrm{SNR}}}
\newcommand{\herm}[1]{\ensuremath{#1^{\mathrm{H}}}}
\newcommand{\trans}[1]{\ensuremath{#1^{\mathrm{T}}}}
\newcommand{\id}[1]{\ensuremath{\mathbf{I}_{#1}}}
\newcommand{\power}[1]{\ensuremath{\mathrm{P}_{#1}}}
\newcommand{\euler}{\ensuremath{\mathit{e}}}
\newcommand{\imunit}{\ensuremath{\mathrm{j}}}
\newcommand{\klcyc}[1]{\ensuremath{\mathcal{S}_{#1}}}
\newcommand{\klcycmat}[1]{\ensuremath{\boldsymbol{\mathcal{S}}_{#1}}}
\newcommand{\bsf}[1]{\ensuremath{\boldsymbol{\mathsf{#1}}}}
\newcommand{\vecrand}[1]{\ensuremath{\widetilde{\bsf{#1}}}}
\newcommand{\vecdet}[1]{\ensuremath{\widetilde{\mathbf{#1}}}}
\newcommand{\complex}{\ensuremath{\mathbb{C}}}
\newcommand{\reals}{\ensuremath{\mathbb{R}}}
\newcommand{\integs}{\ensuremath{\mathbb{Z}}}
\newcommand{\nr}{\ensuremath{N_{\mathrm{R}}}}
\newcommand{\nt}{\ensuremath{N_{\mathrm{T}}}}
\newcommand{\frob}{\ensuremath{\mathrm{F}}}
\newcommand{\placeholder}{\ensuremath{\boldsymbol{\cdot}}}
\newcommand{\der}{\ensuremath{\mathrm{d}}}
\newcommand{\Prime}{\ensuremath{\cramped{'\negthinspace}}}
\DeclareMathOperator{\expec}{E}
\DeclareMathOperator*{\argmax}{arg\,max}
\DeclareMathOperator*{\argmin}{arg\,min}
\DeclareMathOperator{\Diag}{Diag}
\DeclareMathOperator{\trace}{Tr}
\DeclareMathOperator{\vect}{vec}
\DeclareMathOperator{\CN}{\mathcal{CN}}
\DeclareMathOperator{\fft}{\mathtt{FFT}}
\DeclareMathOperator{\sinc}{sinc}
\DeclareMathOperator{\rank}{rank}
\newcommand{\ie}{\textit{i.e.} }
\newcommand{\eg}{\textit{e.g.} }
\newtheorem{proposition}{Proposition}
\begin{document}

	\title[]{Low-complexity Detection for Noncoherent Massive MIMO Communications}
	
	\author*[1]{\fnm{Marc} \sur{Vil\`a-Insa}}\email{marc.vila.insa@upc.edu}
	\author[1]{\fnm{Jaume} \sur{Riba}}\email{jaume.riba@upc.edu}
	
	\affil[1]{\orgdiv{Signal Processing and Communications Group (SPCOM)}, \orgname{Universitat Polit\`ecnica de Catalunya (UPC)}, \orgaddress{\city{Barcelona}, \postcode{08034}, \country{Spain}}}
	
	\abstract{%
		This work studies a point-to-point \acrshort{mimo} uplink in which \acrlong{ue} transmits data to a \acrlong{bs} employing a massive array.
		Signal detection is noncoherent and fading is assumed to follow the Weichselberger model.
		By exploiting the spatial stationarity of fading at the \acrlong{bs}, a \acrlong{cs} structure emerges naturally in the space-time representation, which suggests formulating the statistical properties of the received signal in the \acrlong{kl} domain.
		This allows the derivation of a low-complexity receiver that approximates \acrlong{ml} detection even for a moderate array size.
		The spectral analysis of the problem provides valuable insights on the design of space-time codewords.%
	}
	
	\keywords{Noncoherent detection, massive \acrshort{mimo} communications, \acrlong{cs} processes, spatially stationary channel, Weichselberger model, \acrlong{kl} representation.}
	
	
	
	\maketitle
	
	\section{Introduction}\label{sec:intro}
	
		The study of events that repeat periodically over time is ubiquitous within many areas of natural and social sciences~\cite{Gardner2006}.
		From climatology~\cite{Schkoda2014} to econometrics~\cite{BroszkiewiczSuwaj2004}, through radio astronomy~\cite{Hellbourg2011} and genomics~\cite{Arora2008}, the efforts to characterize random phenomena with periodic structures have shaped a vast and rich literature over the decades.
		Most of these processes emerge from the interaction between periodic and random circumstances.
		Therefore, although the resulting processes are not periodic themselves, their statistical characterizations are~\cite{Napolitano2013}.
		Of particular interest are those stochastic processes whose second-order moments (\ie correlation structures) are periodic functions of time, which are known as \textit{\acrfull{cs}} or \textit{periodically correlated}~\cite{Gardner2006}.
		This is the kind of statistical periodicity which has been studied more extensively~\cite{Gardner1988,Gardner2006,Napolitano2016}.
		
		As one would expect, man-made signals (\eg the ones used in radar or sonar~\cite{Napolitano2016a}) or those related to human activities (such as rotating machinery~\cite{Antoni2004}) are perfectly captured by the \acrshort{cs} framework.
		Wireless communications are no exception: periodicities occur due to the plethora of signal processing operations involved in modulation, sampling, multiplexing or coding, among many others~\cite{Gardner2006}.
		The pervasiveness of \acrshort{cs} processes in communication systems has motivated the design and development of techniques that exploit their distinct nature~\cite{Gardner1994}.
		They rely on the \acrshort{cs} structure that arises on the oversampled (\ie over the Nyquist rate) communications signals as a means to achieve improved estimation and detection performance.
		As a byproduct, they also offer improved robustness to perform these fundamental tasks in adverse scenarios.
		The advantages come either by avoiding the use of higher order statistics, which become very noisy at low \acrfull{snr} regimes, or by inducing improved richness on the signal space dimensionality.
		For example, in~\cite{Shi2009,Horstmann2018} cyclostationarity is exploited for the estimation of fundamental synchronization parameters, such as timing, symbol rate and carrier frequency, and in~\cite{Tong1995} it is used to perform blind second-order channel estimation without losing phase information.
		In~\cite{Riba2010}, cyclostationarity is proposed as an alternative to moment-based estimation of \acrshort{snr}, whereas works such as~\cite{Pries2018} harness it to perform spectrum sensing in opportunistic communications based on feature detection.

		The work presented herein showcases a novel formulation of a well-known communication problem in which neither the transmitter nor the receiver is aware of instantaneous \acrfull{csi}, referred to as \textit{noncoherent}.
		Such schemes have recently rekindled an interest in the scientific community due to the introduction of potential applications that would benefit from circumventing the acquisition of instantaneous \acrshort{csi}~\cite{Chafii2023}.
		The considered approach exploits periodicities in data for noncoherent detection by harnessing the statistical knowledge of fading (\ie \textit{statistical \acrshort{csi}}) through the \textit{channel hardening} effected by a large array at the receiver~\cite{Jing2016}.
		This proposal differs from classical communications literature in the sense that cyclostationarity does not arise in the time domain.
		Instead, we present a channel model that naturally yields a repeating correlation function, by treating the spatial and temporal dimensions jointly.
		This fading profile, referred to as \textit{\acrlong{sswmod}}, is the result of incorporating the spatial stationarity of the electromagnetic channel~\cite{Marzetta2018} into the well-known correlation model from~\cite{Weichselberger2006}.
		This combination of ideas allows the mathematical machinery developed in~\cite{VilaInsa2024b} for \acrshort{simo} systems to be extended to a \acrshort{mimo} scenario.
	
		The main contributions in this work are the following.
		Beside unveiling the \acrshort{cs} qualities of the fading process, the signal at the receiver is also identified as periodically correlated.
		Leveraging the massive number of antennas to obtain a large sample size, we apply an extension of the Szegö theorem from~\cite{Riba2022,VilaInsa2024} and derive the asymptotic \textit{\acrfull{cl}} and \textit{\acrfull{kl}} expansions of the received signal.
		These representations broaden the frequency domain interpretations from classical signal processing of stationary processes and allow the materialization of clear spectral correlation structures.
		It is in these domains in which the fundamental entities in codeword detection (\eg power spectra and signal subspaces) become conspicuous.
		In particular, we study the asymptotic expression of the \acrfull{ml} detector, from which we obtain insights onto the pairwise codeword detection problem and relate them to other works in the literature.
		As a byproduct of this theoretical analysis, we propose a low-complexity implementation of the \acrshort{ml} receiver that approximates its asymptotic spectral form for a finite number of antenna elements.
		By taking advantage of well-known algorithms like the \acrfull{fft}, the developed technique displays improved computational complexity while retaining near-optimal performance, even for moderately-sized systems.
				
		The paper is structured as follows.
		\autoref{sec:problem} introduces the communication problem and signal detection framework.
		The statistics of fading and how cyclostationarity emerges from them are presented as well.
		\autoref{sec:ml} extends signal detection in the large array regime and exploits statistical periodicities as the number of receiving antennas grows unbounded.
		The limiting spectral expressions provide the foundation for a low-complexity detector with negligible performance penalties, even for moderately sized receivers.
		Finally, \autoref{sec:structure} provides a theoretical overview of this asymptotic formulation through the lens of detection theory.
		Relevant mathematical structures discernible in the spectral domain are linked to physical entities to unveil further insights on noncoherent codeword detection.
		
		The notation used throughout the text is defined next.
		Vectors and matrices: boldface lowercase and uppercase.
		$(r,c)$-component of a matrix: $[\mathbf{A}]_{r,c}$.
		Transpose and conjugate transpose: $\trans{\placeholder}$, $\herm{\placeholder}$.
		Trace: $\trace[\placeholder]$.
		Identity matrix of size $N$: $\id{N}$.
		Column-wise vectorization: $\vect(\placeholder)$.
		Kronecker product: $\otimes$.
		Matrix determinant: $\lvert\mathbf{A}\rvert$.
		Euclidean norm: $\lVert\mathbf{a}\rVert$.
		Frobenius norm: $\lVert\mathbf{X}\rVert_{\frob}$.
		Diagonal matrix constructed from the elements of a set: $\Diag(\placeholder)$.
		Dirac delta distribution: $\delta(\placeholder)$.
		Kronecker delta: $\delta_{a}$.
		Random variables: sans serif font.
		Expectation: $\expec[\placeholder]$.
		Circularly symmetric complex normal vector: $\bsf{a}\sim\CN(\boldsymbol{\mu},\mathbf{C})$.
	
	\section{Problem formulation} \label{sec:problem}
	
		\subsection{Signal model} \label{ssec:model}
	
			Consider a \acrshort{mimo} point-to-point uplink, in which a \acrfull{ue} with $\nt$ antennas communicates wirelessly with a \acrfull{bs} employing a massive array of $\nr$ antennas, such that $\nr>>\nt$ (\textit{asymmetric massive \acrshort{mimo}}).
			The channel displays frequency flat fading with a coherence time of $K\geq\nt$.
			It is represented by a matrix $\bsf{H}\in\complex^{\nt\times\nr}$ that remains constant during $K$ channel uses (\ie block flat fading).
			We assume a power normalization of $\expec[\lVert\bsf{H}\rVert_{\frob}^2]\triangleq\nt\nr$.
			During a coherence block, the transmitter sends a codeword $\bsf{X}\in\complex^{K\times\nt}$ selected equiprobably from a finite alphabet $\mathcal{X}$ of size $M$.
			
			The signal at the \acrshort{bs} (\ie the receiver) is represented by the following complex baseband time-space matrix:
			\begin{equation}
				\bsf{Y}=\bsf{X}\bsf{H}+\bsf{Z}\in\complex^{K\times\nr},
			\end{equation}
			where $\bsf{Z}$ is an additive white Gaussian noise component with i.i.d. entries, such that $[\bsf{Z}]_{r,c}\sim\CN(0,\power{\bsf{Z}})$.
			Fading is assumed correlated Rayleigh with arbitrary covariance.
			To properly characterize its second order statistics, it is convenient to vectorize $\bsf{H}$ column-wise:
			\begin{equation}
				\vecrand{h}\triangleq\vect(\bsf{H})\sim\CN(\mathbf{0}_{\nt\nr},\covmat{\vecrand{h}}), \quad \covmat{\vecrand{h}}\triangleq\expec\bigl[\vecrand{h}\vecrand{h}\herm{\vphantom{\bsf{h}}}\bigr].
			\end{equation}
			Similarly, noise is distributed as $\vecrand{z}\triangleq\vect(\bsf{Z})\sim\CN(\mathbf{0}_{K\nr},\power{\bsf{Z}}\id{K\nr})$.
			Unless stated otherwise, these distributions are known at the receiver but not at the transmitter, which is referred to as \textit{statistical \acrshort{csi} at the receiver}.
			
			The previous vectorization approach can be applied onto the received signal $\bsf{Y}$ as well.
			Moreover, it will prove useful when defining the \acrlong{ml} detector in the sequel.
			Therefore,
			\begin{equation}
				\vecrand{y}\triangleq\vect(\bsf{Y}) = \vecrand{X}\vecrand{h}+\vecrand{z}\in\complex^{K\nr}, \label{eq:vec}
			\end{equation}
			where $\vecrand{X}\triangleq\id{\nr}\otimes\bsf{X}$.
	
		\subsection{Unconditional maximum likelihood detection} \label{ssec:uml}
	
			Given an equiprobable alphabet, the receiver that minimizes the error probability of codeword detection is known to be the \acrfull{ml} detector~\cite[Thm.~21.3.3]{Lapidoth2017}.
			It is derived from the distribution of the received signal conditioned to a transmitted codeword $\mathbf{X}_{i}\in\mathcal{X}$, and a channel realization $\mathbf{H}$:
			\begin{equation}
				\mathrm{f}_{\bsf{Y}\vert\mathbf{X}_{i},\mathbf{H}}(\vecdet{y})=\frac{1}{(\pi\power{\bsf{Z}})^{K\nr}}\exp\bigl(-\tfrac{1}{\power{\bsf{Z}}}\lVert\vecdet{y}-\vecdet{X}_{i}\vecdet{h}\rVert^2\bigr).\label{eq:conditional}
			\end{equation}		
			In a noncoherent communication setting, the channel realization is unknown at both ends of the link and treated as random (\ie \textit{unconditional model}~\cite{Stoica1990}).
			It is removed from conditioning by marginalizing~\eqref{eq:conditional}.
			The resulting distribution of $\bsf{Y}\vert\mathbf{X}_i$ becomes
			\begin{equation}
				\mathrm{f}_{\bsf{Y}\vert\mathbf{X}_{i}}(\vecdet{y}) = \expec_{\bsf{H}}\bigl[\mathrm{f}_{\bsf{Y}\vert\mathbf{X}_{i},\mathbf{H}}(\vecdet{y})\bigr] = \frac{1}{\pi^{K\nr}\lvert\covmat{i}\rvert}\exp\bigl(-\vecdet{y}\herm{\vphantom{\mathbf{y}}}\covmat{i}^{-1}\vecdet{y}\bigr), \label{eq:likelihood}
			\end{equation}
			where
			\begin{equation}
				\covmat{i} \triangleq \vecdet{X}_{i}\covmat{\vecrand{h}}\vecdet{X}\herm{\vphantom{\mathbf{X}}}_{i} + \power{\bsf{Z}}\id{K\nr}
			\end{equation}
			is the covariance matrix of the received signal.
			With the likelihood function~\eqref{eq:likelihood}, we finally define the normalized \acrshort{ml} detector as
			\begin{equation}
				\begin{aligned}
					\widehat{\mathbf{X}}_{\textsc{ml}} \triangleq \argmax_{\mathbf{X}_{j}\in\mathcal{X}} \mathrm{f}_{\bsf{Y}\vert\mathbf{X}_{j}}(\vecdet{y}) &=  \argmin_{\mathbf{X}_{j}\in\mathcal{X}} -\tfrac{1}{K\nr}\ln\mathrm{f}_{\bsf{Y}\vert\mathbf{X}_{j}}(\vecdet{y}) \triangleq \argmin_{\mathbf{X}_{j}\in\mathcal{X}} \mathcal{L}_j(\vecdet{y})\\
					&= \argmin_{\mathbf{X}_{j}\in\mathcal{X}} \tfrac{1}{K\nr}\vecdet{y}\herm{\vphantom{\mathbf{y}}}\covmat{j}^{-1}\vecdet{y} + 	\tfrac{1}{K\nr}\ln\lvert\covmat{j}\rvert.
				\end{aligned} \label{eq:ml}
			\end{equation}
			It is composed of a data-dependent quadratic term and a logarithmic term:
			\begin{equation}
				\mathrm{Q}_{j}(\vecdet{y})\triangleq\tfrac{1}{K\nr}\vecdet{y}\herm{\vphantom{\mathbf{y}}}\covmat{j}^{-1}\herm{\vecdet{y}} \quad , \quad \ell_{j}\triangleq\tfrac{1}{K\nr}\ln\lvert\covmat{j}\rvert, \label{eq:ml_terms}
			\end{equation}
			such that $\mathcal{L}_{j}(\vecdet{y}) = \mathrm{Q}_{j}(\vecdet{y}) + \ell_{j}$.
			
			From a theoretical perspective,~\eqref{eq:ml} entails some analysis issues.
			If $\covmat{\vecrand{h}}$ is an unstructured, arbitrary covariance matrix, the set of \acrshort{ml} metrics $\{\mathcal{L}_{j}(\vecdet{y})\}$ does not provide any recognizable insights onto the detection problem nor onto the design of $\mathcal{X}$ based on $\{\mathrm{Q}_{j}(\vecdet{y})\}$ and $\{\ell_{j}\}$.
			Therefore, the next section is devoted to defining a more specific correlation model for $\bsf{H}$.
			Its statistical properties, paired with the use of a large array at the \acrshort{bs}, will allow us to represent the received signal through its \acrlong{kl} and \acrlong{cl} expansions in \autoref{sec:ml}.
			It is in such spectral domains in which the detection-relevant structures emerge.
			They are properly discussed in \autoref{sec:structure}, where we relate them to the idea of \textit{singular detection}~\cite{VilaInsa2025}.
			As a byproduct of this exploration, a low-complexity implementation of~\eqref{eq:ml} will arise and be assessed in \autoref{ssec:low_complex}.
	
		\subsection{Channel model and spatial stationarity} \label{ssec:channel_model}
	
			The model that will be assumed for the channel throughout the rest of this work is known as \textit{Weichselberger model}~\cite{Weichselberger2006}, which is a particularization of correlated Rayleigh fading:
			\begin{equation}
				\bsf{H}\triangleq\Ut\mathring{\bsf{H}}\herm{\Ur},\label{eq:weichselberger}
			\end{equation}
			where $\Ut\in\complex^{\nt\times\nt}$ and $\Ur\in\complex^{\nr\times\nr}$ are fixed unitary bases that represent the \acrshort{ue} and \acrshort{bs} array geometries, respectively.
			The only stochastic parameters in this model are contained in $\mathring{\bsf{H}}\in\complex^{\nt\times\nr}$, which are independent and distributed as $[\mathring{\bsf{H}}]_{r,c}\sim\CN(0,\gamma_{r,c})$.
			It generalizes various classical \textit{nongeometrical stochastic models} of fading~\cite{Feng2022}, such as i.i.d. Rayleig, keyhole and virtual channel (\textit{beam-domain})~\cite[Sec.~3.6.1]{Heath2018}.
			By relaxing the structural constraints of the \textit{Kronecker model} and allowing correlations between transmitter and receiver, it solves some of its shortcomings without fully sacrificing signal structure.
			
			To simplify subsequent analyses, we will make further assumptions on the statistical properties of $\bsf{H}$.
			It is common in the literature to consider channel fading behaves as a \textit{spatially stationary} process, \ie its second-order moments are invariant to spatial shifts~\cite{Sanguinetti2020}.
			This idea can be effectively harnessed when using a massive array (such as in the \acrshort{bs} of the studied system), in the same manner as how it is done in the time domain when a large data set is available~\cite{Gray2005}, thus giving rise to the classical frequency analysis toolkit grounded on the core property that Fourier coefficients at different frequencies become uncorrelated~\cite{Vaidyanathan2008}.
			Therefore, we only need stationarity to manifest at the \acrshort{bs} side of the wireless link, which translates into the rows of $\bsf{H}$ being jointly \acrshort{wss}:
			\begin{equation}
				\expec\bigl[[\bsf{H}]_{r,c}[\bsf{H}]_{r\Prime,c\Prime}^*\bigr] \equiv \expec\bigl[[\bsf{H}]_{r,c+l}[\bsf{H}]_{r\Prime,c\Prime+l}^*\bigr], \quad \forall l\in\integs. \label{eq:joint_wss}
			\end{equation}
			
			To expand on how the previous notion can be exploited, we define the correlation matrix
			\begin{equation}
				\mathbf{R}_{\BS} \triangleq 		\expec[\herm{\bsf{H}}\bsf{H}] = \Ur\expec[\mathring{\bsf{H}}\herm{\vphantom{\bsf{H}}}\mathring{\bsf{H}}]\herm{\Ur} = \Ur\Diag\bigl(\textstyle\bigl\{\sum_{r=0}^{\nt-1}\gamma_{r,c}\bigr\}_{c}\bigr)\herm{\Ur}\in\complex^{\nr\times\nr}, \label{eq:bs_cov}
			\end{equation}
			in which we have used the Weichselberger model from~\eqref{eq:weichselberger}.
			The channel being spatially stationary at the \acrshort{bs} implies that $[\mathbf{R}_{\BS}]_{r,c} \equiv [\mathbf{R}_{\BS}]_{r+l,c+l}$ for any $l$, by~\eqref{eq:joint_wss}.
			Therefore, $\mathbf{R}_{\BS}$ is constant along its diagonals which is known as a \textit{Toeplitz matrix} (see \autoref{subfig:toeplitz}).
			This type of matrices have the special property of asymptotically diagonalizing onto the Fourier basis $[\mathbf{F}_{\nr}]_{r,c}\triangleq \euler^{\imunit2\pi rc/\nr}/\sqrt{\nr}$ as $\nr\to\infty$~\cite{Sanguinetti2020}.
			The interest behind this structure is that it can be exploited with well-known precoding implementations, such as the phase-shifter architectures used for beamspace techniques~\cite{Zhang2020}.
			
			Under the premise of using a massive array in the \acrshort{bs}, and to simplify upcoming mathematical derivations, we will assume that $\Ur\triangleq\herm{\mathbf{F}_{\nr}}$, even for finite $\nr$.
			As the number of receiving antennas grows without bound, the previously described Fourier basis will asymptotically become the true eigenbasis of $\bsf{H}$.
			With this,~\eqref{eq:weichselberger} becomes the \textit{\acrfull{sswmod}}:
			\begin{equation}
				\bsf{H}\triangleq\Ut\mathring{\bsf{H}}\mathbf{F}_{\nr}.\label{eq:ssw}
			\end{equation}
			Returning to the formulation from \autoref{ssec:model}, we may want to express it in vector form:
			\begin{equation}
				\vecrand{h}=(\mathbf{F}_{\nr}\otimes\Ut)\vect(\mathring{\bsf{H}}).\label{eq:vec_ssw}
			\end{equation}
			Its covariance matrix is
			\begin{equation}
				\covmat{\vecrand{h}} = (\mathbf{F}_{\nr}\otimes\Ut) \mathbf{\Gamma} (\herm{\mathbf{F}_{\nr}}\otimes\herm{\Ut}),
			\end{equation}
			where $\mathbf{\Gamma} \triangleq \expec[\vect(\mathring{\bsf{H}})\vect\herm{(\mathring{\bsf{H}})}] \in\reals^{\nt\nr\times\nt\nr}$ is a diagonal matrix which contains its eigenvalues.
			By~\eqref{eq:weichselberger}, we know they are nonnegative and $[\mathbf{\Gamma}]_{r+c\nt,r+c\nt} = \gamma_{r,c}$.
			
			\begin{figure}[t]
				\centering
				\begin{tikzpicture}
					\pgfmathsetmacro{\arrowlen}{3}
					\pgfmathsetmacro{\sqlen}{0.5}
					\readlist*\colorlist{Firebrick1,SpringGreen3,RoyalBlue1}
					\tikzset{	
						mysquare/.pic={
							\path[pic actions] +($(\sqlen/10,\sqlen/10)$) rectangle +($(9*\sqlen/10,9*\sqlen/10)$);
							\draw[thick] +(0,0) rectangle +(\sqlen,\sqlen);
						}
					}
					\node[right] (H) at (0,0) {$\bsf{H}=$};
					\foreach \ntx in {1,2,3}
					\foreach \nrx in {0,1,2}{	
						\itemtomacro\colorlist[\ntx]\thiscolor
						\pgfmathparse{int(100-(\nrx*25))}
						\draw ($(H.east)+(\nrx*\sqlen,\ntx*\sqlen - 2.5*\sqlen)$) pic[fill=\thiscolor!\pgfmathresult!white]{mysquare};
					}
					\node[
					draw,
					thick,
					rectangle,
					minimum width=1.5cm*\sqlen,
					minimum height=3cm*\sqlen
					] (dots) at ($(H.east)+(3.75*\sqlen,0)$) {$\cdots$};
					\foreach \ntx in {1,2,3}{
						\itemtomacro\colorlist[\ntx]\thiscolor
						\draw ($(dots.center)+(0.75*\sqlen,\ntx*\sqlen - 2.5*\sqlen)$) pic[fill=\thiscolor!25!white]{mysquare};
					}
					\coordinate (matrix) at ($(H.east)+(5.5*\sqlen,0)$);
					\path[
					ultra thick,
					arrows={<->},
					color=gray
					] ($(matrix)+(-5.5*\sqlen,2*\sqlen)$) edge["WSS"] +($(5.5*\sqlen,0)$);
					\node[rotate=90] (ni) at ($(matrix)-(2.75*\sqlen,2*\sqlen)$) {$\ni$};
					\node[anchor=north,rectangle,draw,below] at (ni.west) {$\complex^{\nt\times\nr}$};
					\coordinate (vector) at ($(matrix)+(\arrowlen,0)$);
					\path[	
					ultra thick,
					arrows={Circle[open]-Stealth[]}
					] (matrix) edge["vectorization"] (vector);
					\foreach \ntx in {1,2,3}{ 
						\itemtomacro\colorlist[\ntx]\thiscolor
						\draw ($(vector)+(0,0.5*\sqlen+\ntx*\sqlen)$) pic[fill=\thiscolor]{mysquare};
						\draw ($(vector)+(0,-2.5*\sqlen+\ntx*\sqlen)$) pic[fill=\thiscolor!75!white]{mysquare};
					}
					\node[
					draw,
					thick,
					rectangle,
					minimum width=1cm*\sqlen,
					minimum height=3cm*\sqlen
					] at ($(vector)+(0.5*\sqlen,-3*\sqlen)$) {$\vdots$};
					\node[right] at ($(vector)+(\sqlen,0)$) {$=\vecrand{h}\in\complex^{\nt\nr}$};
					\draw[
					decorate,
					decoration={calligraphic brace, amplitude=\sqlen*0.5cm},
					thick
					] ($(vector)+(\sqlen,4.5*\sqlen)$) -- +($(0,-3*\sqlen)$)  node[midway,anchor=west,right=\sqlen*0.5cm]{$\nt=3$};
				\end{tikzpicture}
				\caption{Graphical representation of column-wise vectorization of $\bsf{H}$ under the \acrshort{sswmod}.}
				\label{fig:vec}
			\end{figure}
			
			The following result sheds light onto the statistical structure of $\vecrand{h}$.
			A graphical intuition behind it can be found in \autoref{fig:vec}.
			\begin{proposition}\label{prop:H}
				Under the \acrshort{sswmod}, fading behaves as a \acrshort{cs} process of period $\nt$.
			\end{proposition}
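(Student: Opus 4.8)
I would verify the definition of a wide-sense cyclostationary sequence directly. Recall that a discrete-time process $\{\bsf{x}_n\}$ is \acrshort{cs} with period $P$ when its mean and its autocorrelation $\expec[\bsf{x}_n\bsf{x}_m^{*}]$ are invariant under the joint shift $(n,m)\mapsto(n+P,m+P)$. Since $\vecrand{h}\sim\CN(\mathbf{0},\covmat{\vecrand{h}})$ is zero-mean, the first-order condition holds trivially, so the whole statement reduces to showing that $\covmat{\vecrand{h}}$ is invariant under an $\nt$-shift of both indices, \ie that it is block-Toeplitz with $\nt\times\nt$ blocks.

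First I would pin down the index bookkeeping forced by the column-wise vectorization: writing the running index as $n=r+c\nt$ with $r\in\{0,\dots,\nt-1\}$ and $c\in\{0,\dots,\nr-1\}$, one has $[\vecrand{h}]_{n}=[\bsf{H}]_{r,c}$, so the transmit-antenna coordinate is $r=n\bmod\nt$ and the receive-antenna coordinate is $c=\lfloor n/\nt\rfloor$; the key observation is that $n\mapsto n+\nt$ leaves $r$ fixed and advances $c$ by one (a row-wise stacking would instead give period $\nr$). Consequently, with $m=r\Prime+c\Prime\nt$, one has $\expec\bigl[[\vecrand{h}]_{n+\nt}[\vecrand{h}]_{m+\nt}^{*}\bigr]=\expec\bigl[[\bsf{H}]_{r,c+1}[\bsf{H}]_{r\Prime,c\Prime+1}^{*}\bigr]$, and it remains to check that this equals $\expec[[\bsf{H}]_{r,c}[\bsf{H}]_{r\Prime,c\Prime}^{*}]$. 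But that is precisely the joint wide-sense stationarity of the rows of $\bsf{H}$ at the \acrshort{bs}, \eqref{eq:joint_wss}, read with lag $l=1$, so the claim follows by simply invoking that property. For a self-contained argument straight from \eqref{eq:ssw}, I would instead expand $\bsf{H}=\Ut\mathring{\bsf{H}}\mathbf{F}_{\nr}$ and use the independence of the entries of $\mathring{\bsf{H}}$ together with $[\mathbf{F}_{\nr}]_{q,c}[\mathbf{F}_{\nr}]_{q,c\Prime}^{*}=\euler^{\imunit2\pi q(c-c\Prime)/\nr}/\nr$ to obtain
\[
\expec\bigl[[\bsf{H}]_{r,c}[\bsf{H}]_{r\Prime,c\Prime}^{*}\bigr]=\tfrac{1}{\nr}\sum_{p=0}^{\nt-1}\sum_{q=0}^{\nr-1}[\Ut]_{r,p}\,[\Ut]_{r\Prime,p}^{*}\,\gamma_{p,q}\,\euler^{\imunit2\pi q(c-c\Prime)/\nr},
\]
a function of $(r,r\Prime,c-c\Prime)$ alone, which exhibits the $\nt$-periodicity directly and re-derives \eqref{eq:joint_wss} as a by-product.

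The computation is routine; the part that requires care is conceptual rather than algebraic. Since $\vecrand{h}$ lives in $\complex^{\nt\nr}$, the phrase ``\acrshort{cs} of period $\nt$'' must be interpreted either through the block-Toeplitz structure of $\covmat{\vecrand{h}}$ just established --- the matrix fingerprint of a period-$\nt$ process observed over $\nr$ consecutive periods --- or by periodically extending $\expec[[\vecrand{h}]_{n}[\vecrand{h}]_{m}^{*}]$ off the observation window, the same mild abuse one commits in calling a finite Toeplitz-structured observation ``\acrshort{wss}.'' It is also worth recording why $\nt$, and not a proper divisor of it, is the natural period: the autocorrelation retains a genuine dependence on the pair $(r,r\Prime)$ because $\sum_{q}\gamma_{p,q}$ generically varies with the transmit index $p$, so no smaller period works and the process does not collapse to a \acrshort{wss} one.
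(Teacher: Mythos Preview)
Your argument is correct and is essentially the paper's own proof: both decode the vectorization index as $n=r+c\nt$, observe that an $\nt$-shift moves only the receive-antenna coordinate, and then invoke the joint \acrshort{wss} assumption~\eqref{eq:joint_wss} to conclude $[\covmat{\vecrand{h}}]_{p,q}=[\covmat{\vecrand{h}}]_{p+l\nt,q+l\nt}$. Your additional self-contained computation from~\eqref{eq:ssw} and the remarks on the finite-window interpretation and minimality of the period are extras not present in the paper, but they do not change the route.
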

			\begin{proof}
				Recall~\eqref{eq:joint_wss}.
				Expressing this relationship in terms of the vectorized form $\vecrand{h}$ yields
				\begin{subequations}
					\begin{align}
						\expec\bigl[[\vecrand{h}]_{r+c\nt}[\vecrand{h}]_{r\Prime+c\Prime\nt}^*\bigr] &\equiv \expec\bigl[[\vecrand{h}]_{r+(c+l)\nt}[\vecrand{h}]_{r\Prime+(c\Prime+l)\nt}^*\bigr] \\
						[\covmat{\vecrand{h}}]_{r+c\nt,r\Prime+c\Prime\nt} &= [\covmat{\vecrand{h}}]_{(r+c\nt)+l\nt,(r\Prime+c\Prime\nt)+l\nt}.
					\end{align}
				\end{subequations}
				By setting $p\triangleq r+c\nt$ and $q\triangleq r\Prime+c\Prime\nt$, we have that
				\begin{equation}
					[\covmat{\vecrand{h}}]_{p,q}=[\covmat{\vecrand{h}}]_{p+l\nt,q+l\nt},
				\end{equation}
				which is a defining property of $\nt$-periodic \acrshort{cs} processes.
			\end{proof}
			\acrshort{cs} processes have been widely studied in the literature~\cite{Gardner2006}, since they display second order statistics that are invariant to shifts of length equal to their period.
			For $\vecrand{h}$, this translates into a covariance matrix that is periodic along its diagonals with period $\nt$, known as $\nt$\textit{-Toeplitz}.
			These matrices are also referred to as \textit{block-Toeplitz}, since they have constant blocks of size $\nt$ diagonally (see \autoref{subfig:block_toeplitz}).
			They play an important role in signal processing, and have been thoroughly studied in the literature~\cite{Ramirez2015,Riba2022}.
			
			\begin{figure}[t]
				\centering
				\begin{subfigure}{0.49\linewidth}
					\centering
					\begin{tikzpicture}
						\pgfmathsetmacro{\sqlen}{.5}
						\pgfmathsetmacro{\xmax}{6}
						\pgfmathsetmacro{\ymax}{6}
						\readlist*\colorlist{MediumPurple2,Tan2,Gold2,SpringGreen2,RoyalBlue2,Firebrick2,RoyalBlue2,SpringGreen2,Gold2,Tan2,MediumPurple2}
						\tikzset{	
							mysquare/.pic={
								\path[pic actions] +($(0.1*\sqlen,0.1*\sqlen)$) rectangle +($(0.9*\sqlen,0.9*\sqlen)$);
								\draw[thick] +(0,0) rectangle +(\sqlen,\sqlen);
							}
						}
						\foreach \x in {1,...,\xmax}{
							\foreach \y in {1,...,\ymax}{
								\pgfmathparse{int(\x+\y-1)}
								\itemtomacro\colorlist[\pgfmathresult]\thiscolor
								\draw ($(\x*\sqlen-\sqlen,\y*\sqlen-\sqlen)$) pic[fill=\thiscolor]{mysquare};
							}
						}
						\foreach \x in {1,...,\xmax}{
							\draw[draw=black] 	($(\x*\sqlen,\ymax*\sqlen)$) -- ($(\xmax*\sqlen,\x*\sqlen)$);
							\draw[draw=black] 	($(0,\ymax*\sqlen-\x*\sqlen)$) -- ($(\xmax*\sqlen-\x*\sqlen,0)$);
						}
						\draw[draw=black] 	($(0,\ymax*\sqlen)$) -- ($(\xmax*\sqlen,0)$);
					\end{tikzpicture}
					\subcaption{Toeplitz matrix.}
					\label{subfig:toeplitz}
				\end{subfigure}
				\begin{subfigure}{0.49\linewidth}
					\centering
					\begin{tikzpicture}
						\pgfmathsetmacro{\sqlen}{.5}
						\pgfmathsetmacro{\xmax}{6}
						\tikzset{
							mysquare/.pic={
								\path[pic actions] +($(0.1*\sqlen,0.1*\sqlen)$) rectangle +($(0.9*\sqlen,0.9*\sqlen)$);
							}
						}
						\tikzset{
							blockblue/.pic={
								\draw (0,0)	pic[fill=RoyalBlue4] 	{mysquare};
								\draw (0,\sqlen) pic[fill=Firebrick4] 	{mysquare};
								\draw (\sqlen,0) pic[fill=Firebrick1] 	{mysquare};
								\draw (\sqlen,\sqlen) pic[fill=RoyalBlue4] 	{mysquare};
								\draw[draw=black, thick]	(0,0) rectangle ($(2*\sqlen,2*\sqlen)$);
							}
						}
						\tikzset{
							blockred/.pic={
								\draw (0,0)	pic[fill=MediumPurple4] {mysquare};
								\draw (0,\sqlen) pic[fill=Tan4] {mysquare};
								\draw (\sqlen,0) pic[fill=Tan1] {mysquare};
								\draw (\sqlen,\sqlen) pic[fill=Gold1] {mysquare};
								\draw[draw=black, thick]	(0,0) rectangle ($(2*\sqlen,2*\sqlen)$);
							}
						}
						\tikzset{
							blockredT/.pic={
								\draw (0,0)	pic[fill=Gold1] {mysquare};
								\draw (0,\sqlen) pic[fill=Tan4] {mysquare};
								\draw (\sqlen,0) pic[fill=Tan1] {mysquare};
								\draw (\sqlen,\sqlen) pic[fill=MediumPurple4] {mysquare};
								\draw[draw=black, thick]	(0,0) rectangle ($(2*\sqlen,2*\sqlen)$);
							}
						}
						\tikzset{
							blockgreen/.pic={
								\draw (0,0)	pic[fill=Gold4] {mysquare};
								\draw (0,\sqlen) pic[fill=SpringGreen4] {mysquare};
								\draw (\sqlen,0) pic[fill=SpringGreen1] {mysquare};
								\draw (\sqlen,\sqlen) pic[fill=RoyalBlue1] {mysquare};
								\draw[draw=black, thick]	(0,0) rectangle ($(2*\sqlen,2*\sqlen)$);
							}
						}
						\tikzset{
							blockgreenT/.pic={
								\draw (0,0)	pic[fill=RoyalBlue1] {mysquare};
								\draw (0,\sqlen) pic[fill=SpringGreen4] {mysquare};
								\draw (\sqlen,0) pic[fill=SpringGreen1] {mysquare};
								\draw (\sqlen,\sqlen) pic[fill=Gold4] {mysquare};
								\draw[draw=black, thick]	(0,0) rectangle ($(2*\sqlen,2*\sqlen)$);
							}
						}
						\draw (0,0) pic {blockred};
						\draw ($(4*\sqlen,4*\sqlen)$ )pic {blockredT};
						\draw ($(2*\sqlen,0)$) pic {blockgreen};
						\draw ($(0,2*\sqlen)$) pic {blockgreen};
						\draw ($(2*\sqlen,4*\sqlen)$) pic {blockgreenT};
						\draw ($(4*\sqlen,2*\sqlen)$) pic {blockgreenT};
						\draw ($(0,4*\sqlen)$) pic {blockblue};
						\draw ($(2*\sqlen,2*\sqlen)$) pic {blockblue};
						\draw ($(4*\sqlen,0)$) pic {blockblue};
						\foreach \x in {1,...,\xmax}{
							\draw[draw=black, dashed] 	($(\x*\sqlen,\xmax*\sqlen)$) -- ($(\xmax*\sqlen,\x*\sqlen)$);
							\draw[draw=black, dashed] 	($(0,\xmax*\sqlen-\x*\sqlen)$) -- ($(\xmax*\sqlen-\x*\sqlen,0)$);
						}
						\draw[draw=black, dashed] 	($(0,\xmax*\sqlen)$) -- ($(\xmax*\sqlen,0)$);
						\draw[decorate, decoration={calligraphic brace, amplitude=\sqlen*0.5cm}, thick] ($(\xmax*\sqlen,\xmax*\sqlen)$) -- ++($(0,-2*\sqlen)$)
						node[midway,anchor=west,right=\sqlen*0.5cm]{$\nt=2$};
					\end{tikzpicture}
					\subcaption{$\nt$-Toeplitz matrix.}
					\label{subfig:block_toeplitz}
				\end{subfigure}
				\caption{Graphical representation of the covariance matrix of a \acrshort{wss} process and a \acrshort{cs} process of period 2.}
				\label{fig:toeplitz}
			\end{figure}
			
		\subsubsection{Fading statistics in the large array regime}
			
			To conclude this section, we will present the fading correlation in the large array regime, \ie as $\nr\to\infty$.
			We define the \acrfull{acf} of $\{\mathsf{h}(n)\triangleq[\vecrand{h}]_n\}$, for $n\in\{-\frac{\nt\nr}{2},\dots,\frac{\nt\nr}{2}-1\}$ as\footnote{We assume $\nr\in2\mathbb{N}$ without loss of generality.}
			\begin{equation}
				\cov{\bsf{H}}(n,m) \triangleq\expec[\mathsf{h}(n+m)\mathsf{h}(n)]=[\covmat{\vecrand{h}}]_{n+m,n}.
			\end{equation}
			Under model~\eqref{eq:ssw}, it takes the form
			\begin{equation}
				\cov{\bsf{H}}(n,m)= \smashoperator{\sum_{p,q=-\frac{\nr}{2}}^{\frac{\nr}{2}-1}} [\Ut\covmat{\bsf{H}}(p-q)\herm{\Ut}]_{n+m-p\nt,n-q\nt}, \label{eq:channel_acf}
			\end{equation}
			where $\covmat{\bsf{H}}(m)\in\complex^{\nt\times\nt}$ is a diagonal matrix whose $k$th entry is the \acrshort{acf} of $\{\mathring{\mathsf{h}}_k(n)\triangleq[\mathring{\bsf{H}}\mathbf{F}_{\nr}]_{k,n}\}$, which is a \acrfull{wss} process.
			Indeed, its correlation is shift-invariant:
			\begin{equation}
				\expec\bigl[\mathring{\mathsf{h}}_{k}(n+m)\mathring{\mathsf{h}}_{k\Prime}^*(n)\bigr] = \delta_{k-k\Prime}\smashoperator{\sum_{l=0}^{\nr-1}}\tfrac{1}{\nr}\gamma_{k,l}\euler^{\imunit2\pi m\frac{l}{\nr}} \triangleq [\covmat{\bsf{H}}(m)]_{k,k\Prime}.\label{eq:channel_acf_entry}
			\end{equation}
			Finally, the next result states the limit of~\eqref{eq:channel_acf} in the large array regime.
			
			\begin{proposition} \label{prop:channel}
				As $\nr$ grows without bound, the fading \acrshort{acf}~\eqref{eq:channel_acf} approaches
				\begin{equation}
					\lim_{\nr\to\infty} \cov{\bsf{H}}(n,m) = \smashoperator[l]{\sum_{p,q\in\integs}} \int_{0}^{\mathrlap{1}}[\Ut\cycmat{\bsf{H}}(\lambda)\herm{\Ut}]_{n+m-p\nt,n-q\nt}\euler^{\imunit2\pi(p-q)\lambda}\der\lambda, \label{eq:channel_acf_inf}
				\end{equation}
				where
				\begin{equation}
					\cycmat{\bsf{H}}(\lambda)\triangleq \smashoperator{\sum_{m\in\integs}}\covmat{\bsf{H}}(m)\euler^{-\imunit2\pi m\lambda} \label{eq:channel_psd}
				\end{equation}
				is a diagonal matrix whose $k$th entry, $\cyc{\bsf{H}}^{(k)}(\lambda)$, contains the \acrfull{psd} of $\{\mathring{\mathsf{h}}_k(n)\}$.
			\end{proposition}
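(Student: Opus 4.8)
The plan is to observe that, under the convention that a matrix entry vanishes whenever a row or column index falls outside $\{0,\dots,\nt-1\}$ --- the convention under which both \eqref{eq:channel_acf} and the right-hand side of \eqref{eq:channel_acf_inf} must be read --- the double sum over $(p,q)$ reduces to a single term: only the $\nt\times\nt$ block of $\covmat{\vecrand{h}}$ containing the position $(n+m,n)$ contributes. The proposition then rests entirely on a Riemann-sum limit for the scalar fading autocorrelations $[\covmat{\bsf{H}}(m)]_{k,k}$ exhibited in \eqref{eq:channel_acf_entry}.

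Concretely, I would fix $n$ and $m$ and set $p_\star\triangleq\lfloor(n+m)/\nt\rfloor$, $q_\star\triangleq\lfloor n/\nt\rfloor$, $a\triangleq n+m-p_\star\nt$ and $a\Prime\triangleq n-q_\star\nt$, so that $a,a\Prime\in\{0,\dots,\nt-1\}$. Since $n+m-p\nt$ lies in $\{0,\dots,\nt-1\}$ only for $p=p_\star$ (and $n-q\nt$ only for $q=q_\star$), as soon as $\nr$ is large enough that $\{-\nr/2,\dots,\nr/2-1\}$ contains $p_\star$ and $q_\star$, \eqref{eq:channel_acf} collapses to $\cov{\bsf{H}}(n,m)=[\Ut\covmat{\bsf{H}}(p_\star-q_\star)\herm{\Ut}]_{a,a\Prime}$, and the right-hand side of \eqref{eq:channel_acf_inf} collapses to $\int_0^1[\Ut\cycmat{\bsf{H}}(\lambda)\herm{\Ut}]_{a,a\Prime}\,\euler^{\imunit2\pi(p_\star-q_\star)\lambda}\der\lambda$. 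Because $\Ut$ is a fixed $\nt\times\nt$ unitary, $\cycmat{\bsf{H}}(\lambda)$ is diagonal, and the integral is linear, the claim reduces to the scalar statement
\begin{equation*}
	\lim_{\nr\to\infty}[\covmat{\bsf{H}}(m)]_{k,k}=\int_0^1\cyc{\bsf{H}}^{(k)}(\lambda)\,\euler^{\imunit2\pi m\lambda}\der\lambda,\qquad k=0,\dots,\nt-1,
\end{equation*}
evaluated at the fixed lag $m=p_\star-q_\star$.

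For this limit I would start from \eqref{eq:channel_acf_entry}. Writing the $\{\gamma_{k,l}\}_l$ as the samples $\cyc{\bsf{H}}^{(k)}(l/\nr)$ of the power spectrum $\cyc{\bsf{H}}^{(k)}$ --- the structural manifestation of spatial stationarity that the Szeg\"o-type machinery of \cite{Riba2022,VilaInsa2024} relies on --- the quantity $\tfrac{1}{\nr}\sum_{l=0}^{\nr-1}\gamma_{k,l}\,\euler^{\imunit2\pi ml/\nr}$ equals $\int_0^1 f_\nr(\lambda)\der\lambda$, where $f_\nr(\lambda)\triangleq\cyc{\bsf{H}}^{(k)}(\lfloor\nr\lambda\rfloor/\nr)\,\euler^{\imunit2\pi m\lfloor\nr\lambda\rfloor/\nr}$ is a step function. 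With $\cyc{\bsf{H}}^{(k)}$ bounded and Riemann integrable (the standing regularity hypothesis), $f_\nr(\lambda)\to\cyc{\bsf{H}}^{(k)}(\lambda)\,\euler^{\imunit2\pi m\lambda}$ at every continuity point of $\cyc{\bsf{H}}^{(k)}$ --- hence for almost every $\lambda$ --- while $\lvert f_\nr(\lambda)\rvert\le\sup_\mu\cyc{\bsf{H}}^{(k)}(\mu)<\infty$, so \acrshort{ldct} yields the displayed limit. Fourier inversion then turns \eqref{eq:channel_psd} back into $\cyc{\bsf{H}}^{(k)}(\lambda)$, confirming that this is indeed the \acrshort{psd} of $\{\mathring{\mathsf{h}}_k(n)\}$; and re-inserting $a=n+m-p_\star\nt$, $a\Prime=n-q_\star\nt$ and re-expanding the lone surviving term into the sum over all $(p,q)\in\integs$ produces \eqref{eq:channel_acf_inf}.

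The only genuinely non-mechanical step is this interchange of limit and integration for the autocorrelation samples, where the regularity of the power spectrum is used; the point to be careful about is that the finite-array autocorrelation in \eqref{eq:channel_acf_entry} is $\nr$-periodic in $m$, but this never causes trouble because $\nr\to\infty$ is taken with the lag $m=p_\star-q_\star$ held fixed. The collapse of the double sum and the continuity of conjugation by the fixed unitary $\Ut$ are pure bookkeeping.
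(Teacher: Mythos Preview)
Your proposal is correct and follows essentially the same route as the paper's proof: the paper also identifies $[\covmat{\bsf{H}}(m)]_{k,k}$ in \eqref{eq:channel_acf_entry} as a Darboux sum that converges to $\int_0^1\cyc{\bsf{H}}^{(k)}(\lambda)\euler^{\imunit2\pi m\lambda}\der\lambda$ under the identification $\gamma_{k,l}=\cyc{\bsf{H}}^{(k)}(l/\nr)$, invokes Wiener--Khinchin to recognize the \acrshort{psd}, and then substitutes back into \eqref{eq:channel_acf} with the summation range extended to all of $\integs$. Your version is more explicit about the collapse of the double sum to a single $(p_\star,q_\star)$ term and about the \acrshort{ldct} justification for the limit, whereas the paper compresses these bookkeeping steps into a single sentence.
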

			\begin{proof}
				See \autoref{app:proof_channel_acf}.
			\end{proof}
			
	\section{\texorpdfstring{\acrshort{ml}}{ML} detection in the large array regime} \label{sec:ml}
		
		This section deals with the received signal at the \acrshort{bs} and how to exploit its massive array to obtain various insights on the detection of $\{\mathbf{X}_{i}\in\mathcal{X}\}$ and their design.
		
		The received signal from~\eqref{eq:vec}, $\{\mathsf{y}_{i}(n)\triangleq[\vecrand{y}\vert\mathbf{X}_{i}]_{n}\}$, for $n\in\{-\frac{K\nr}{2},\dots,\frac{K\nr}{2}-1\}$, can be expressed as
		\begin{equation}
			\mathsf{y}_{i}(n) = \smashoperator{\sum_{l=-\frac{\nr}{2}}^{\frac{\nr}{2}-1}}[\mathbf{X}_{i}\bsf{H}]_{n-lK,l} + [\vecrand{z}]_n. \label{eq:y_i}
		\end{equation}
		Its \acrshort{acf} is defined as
		\begin{equation}
			\cov{i}(n,m) \triangleq \expec[\mathsf{y}_{i}(n+m)\mathsf{y}_{i}^*(n)].\label{eq:y_acf}
		\end{equation}
		\begin{proposition} \label{prop:cs}
			Under the \acrshort{sswmod}, the received signal at the \acrshort{bs} is a \acrshort{cs} process of period $K$, \ie $\cov{i}(n,m)\equiv\cov{i}(n+rK,m)$.
		\end{proposition}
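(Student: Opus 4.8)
The plan is to read the claim off the structure of the received-signal covariance matrix $\covmat{i} = \vecdet{X}_i\covmat{\vecrand{h}}\vecdet{X}_i^{\mathrm{H}} + \power{\bsf{Z}}\id{K\nr}$, since $\cov{i}(n,m) = [\covmat{i}]_{n+m,n}$ and the assertion $\cov{i}(n,m)\equiv\cov{i}(n+rK,m)$ is precisely the statement that $\covmat{i}$ is $K$-Toeplitz, i.e.\ that its diagonals are $K$-periodic. The additive term $\power{\bsf{Z}}\id{K\nr}$ is diagonal, hence invariant under any common shift of its two indices, so the entire burden falls on the first summand.

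First I would make the block structure explicit through $\vecdet{X}_i = \id{\nr}\otimes\bsf{X}$ with $\bsf{X}\in\complex^{K\times\nt}$. Writing a row index of $\covmat{i}$ as $a+cK$ with $a\in\{0,\dots,K-1\}$ and $c$ the receive-antenna index, and a row index of $\covmat{\vecrand{h}}$ as $r+d\nt$ with $r\in\{0,\dots,\nt-1\}$, one has $[\vecdet{X}_i]_{a+cK,\,r+d\nt} = \delta_{c-d}[\bsf{X}]_{a,r}$. Carrying out the two matrix products, the Kronecker deltas force the channel-side antenna indices to agree with $c$ and $c'$, and one is left with
\[
[\vecdet{X}_i\covmat{\vecrand{h}}\vecdet{X}_i^{\mathrm{H}}]_{a+cK,\,b+c'K} = \sum_{r,r'=0}^{\nt-1}[\bsf{X}]_{a,r}\,[\covmat{\vecrand{h}}]_{r+c\nt,\,r'+c'\nt}\,[\bsf{X}]^{*}_{b,r'}.
\]
This is the structural identity I am after: the antenna-$(c,c')$ block of $\covmat{i}$ is assembled solely from the antenna-$(c,c')$ block of $\covmat{\vecrand{h}}$.

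The periodicity is then immediate. Shifting $(c,c')\mapsto(c+r,c'+r)$ shifts both block indices of $\covmat{\vecrand{h}}$ by $r\nt$, and by \autoref{prop:H} — i.e.\ the $\nt$-Toeplitz identity $[\covmat{\vecrand{h}}]_{p,q}=[\covmat{\vecrand{h}}]_{p+l\nt,q+l\nt}$ with $l=r$ — every such entry is unchanged; together with the invariance of the noise term this yields $[\covmat{i}]_{a+(c+r)K,\,b+(c'+r)K} = [\covmat{i}]_{a+cK,\,b+c'K}$, which is exactly $\cov{i}(n,m)\equiv\cov{i}(n+rK,m)$. (Equivalently, one may substitute \eqref{eq:y_i} into \eqref{eq:y_acf}, split the channel and noise contributions by independence, and invoke the joint \acrshort{wss} property \eqref{eq:joint_wss} directly; the algebra is the same.) I expect the only real care needed is bookkeeping rather than depth: a length-$K$ shift in the vectorized received signal corresponds to a length-$\nt$ shift in the vectorized channel, because $\bsf{X}$ has $K$ rows and $\nt$ columns while the Kronecker factor $\id{\nr}$ indexes the antennas identically on both sides — it is precisely this matching of the receive-antenna index that makes $K$ the relevant period. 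As in \autoref{prop:H}, finite-$\nr$ boundary effects are understood to be immaterial, so the identity is read within the admissible index range.
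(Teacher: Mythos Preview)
Your proof is correct and follows essentially the same route as the paper: the paper expands $\cov{i}(n,m)$ as a double sum over channel entries and invokes the joint \acrshort{wss} property~\eqref{eq:joint_wss} directly after a change of summation index, whereas you phrase the identical computation in block-matrix index notation and invoke \autoref{prop:H} instead---these are the same argument, and you even note this equivalence in your parenthetical remark. The only cosmetic issue is a symbol clash (you use $r$ both as a summation index and as the shift amount), which you may want to rename for clarity.
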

		\begin{proof}
			See \autoref{app:proof_cs}.
		\end{proof}
		\begin{figure}[b]
			\centering
			\begin{tikzpicture}
				\pgfmathsetmacro{\arrowlen}{3}
				\pgfmathsetmacro{\sqlen}{0.5}
				\coordinate (matrix) at ($(7*\sqlen,5*\sqlen)$);
				\coordinate (vector) at ($(matrix) + (0.5*\sqlen,0)$);
				\coordinate (result) at ($(vector) + (7*\sqlen,0)$);
				\tikzset{	
					mysquare/.pic={
						\draw[thick, fill=white] +(0,0) rectangle +(\sqlen,\sqlen);
						\path[pic actions] +($(\sqlen/10,\sqlen/10)$) rectangle +($(9*\sqlen/10,9*\sqlen/10)$);
					}
				}
				\fill[color=Firebrick1!40!white] ($(7*\sqlen,10*\sqlen)$) rectangle (0,0);
				\fill[color=Firebrick1!20!white] (0,0) -- ++($(-3.5*\sqlen,\sqlen)$) -- ++($(0,2*\sqlen)$) -- ++($(3.5*\sqlen,7*\sqlen)$) -- cycle;
				\node[
					fill=Firebrick1!40!white,
					shape=rectangle,
					minimum width=1.5cm*\sqlen,
					minimum height=2cm*\sqlen] at ($(-2.75*\sqlen,2*\sqlen)$) {$\vecdet{X}_i$};
				\fill[color=gray!40!white] ($(vector) - (0,5*\sqlen)$) rectangle +($(2*\sqlen,10*\sqlen)$);
				\fill[color=gray!20!white] ($(vector) + (2*\sqlen,-5*\sqlen)$) -- ++($(3*\sqlen,\sqlen)$) -- ++($(0,2*\sqlen)$) -- ++($(-3*\sqlen,7*\sqlen)$) -- cycle;
				\node[
					fill=gray!40!white,
					shape=rectangle,
					minimum width=1cm*\sqlen,
					minimum height=2cm*\sqlen] at ($(vector) + (4.5*\sqlen,-3*\sqlen)$) {$\vecrand{h}$};
				\fill[color=SkyBlue1!40!white] ($(result) - (0,5*\sqlen)$) rectangle +($(2*\sqlen,10*\sqlen)$);
				\fill[fill=SkyBlue1!20!white] ($(result) + (2*\sqlen,-5*\sqlen)$) -- ++($(4*\sqlen,\sqlen)$) -- ++($(0,2*\sqlen)$) -- ++($(-4*\sqlen,7*\sqlen)$);
				\node[
					fill=SkyBlue1!40!white,
					shape=rectangle,
					minimum width=2cm*\sqlen,
					minimum height=2cm*\sqlen] at ($(result) + (5*\sqlen,-3*\sqlen)$) {$\vecdet{X}_i\vecrand{h}$};
				\foreach \bl in {0,1}
				\foreach \k in {1,2,3}
				\foreach \ntx in {1,2}{	
					\draw ($(\sqlen*\ntx-0.5*\sqlen+\sqlen*2*\bl,\sqlen*9.5-\sqlen*\k-\sqlen*3*\bl)$) pic[fill=Firebrick1]{mysquare};
				}
				\foreach \ntx in {1,2}{ 
					\pgfmathparse{int(100-30*\ntx+30)}
					\draw ($(vector)+(0.5*\sqlen,4.5*\sqlen-\ntx*\sqlen)$) pic[fill=RoyalBlue1!\pgfmathresult!white]{mysquare};
					\draw ($(vector)+(0.5*\sqlen,2.5*\sqlen-\ntx*\sqlen)$) pic[fill=SpringGreen3!\pgfmathresult!white]{mysquare};
				}
				\foreach \ntx in {1,2,3}{ 
					\pgfmathparse{int(100-30*\ntx+30)}
					\draw ($(result)+(0.5*\sqlen,4.5*\sqlen-\ntx*\sqlen)$) pic[fill=MediumPurple1!\pgfmathresult!white]{mysquare};
					\draw ($(result)+(0.5*\sqlen,1.5*\sqlen-\ntx*\sqlen)$) pic[fill=Gold1!\pgfmathresult!white]{mysquare};
				}
				\node[
					draw,
					thick,
					rectangle,
					minimum width=1cm*\sqlen,
					minimum height=3cm*\sqlen,
					fill=white
				] at ($(result)+(\sqlen,-3*\sqlen)$) {\large$\vdots$};
				\node[
					fill=white,
					draw,
					thick,
					minimum width=2cm*\sqlen,
					minimum height=3cm*\sqlen
				] at ($(5.5*\sqlen,2*\sqlen)$) {};
				\node at ($(5.5*\sqlen,2.5*\sqlen)$) {$\vdots$};
				\node at ($(5.5*\sqlen,1*\sqlen)$) {$\times\nr$};
				\node at ($(matrix) - (5.5*\sqlen,0)$) {$\mathbf{X}\triangleq$};
				\node[
					draw,
					thick,
					rectangle,
					minimum width=1cm*\sqlen,
					minimum height=5cm*\sqlen,
					fill=white
				] at ($(vector)+(\sqlen,-2*\sqlen)$) {\large$\vdots$};
				\node[
					single arrow,
					draw,
					thick,
					minimum height=2cm*\sqlen
				] (arrow) at ($(result) + (-1.5*\sqlen,0)$) {};
				\draw[
					decorate,
					decoration={calligraphic brace, amplitude=\sqlen*0.3cm},
					thick
				] ($(2.5*\sqlen,9.5*\sqlen)$) -- ++($(0,-3*\sqlen)$) node[midway,anchor=west,right=\sqlen*0.3cm]{$K=3$};
				\draw[
					decorate,
					decoration={calligraphic brace, amplitude=\sqlen*0.3cm},
					thick
				] ($(4.5*\sqlen,3.5*\sqlen)$) -- ++($(-2*\sqlen,0)$)
				node[
					pos=0.65,
					below=0.3cm*\sqlen
				] {$\nt=2$};
				\draw[draw=black] 	;
				\draw[
					decorate,
					decoration={calligraphic brace, amplitude=\sqlen*0.3cm},
					thick
				] ($(vector)+(1.5*\sqlen,4.5*\sqlen)$) -- ++(0,-2*\sqlen) node[midway,anchor=west,right=\sqlen*0.3cm]{$\nt$};
				\draw[
					decorate,
					decoration={calligraphic brace, amplitude=\sqlen*0.3cm},
					thick
				] ($(result)+(1.5*\sqlen,4.5*\sqlen)$) -- ++(0,-3*\sqlen) node[midway,anchor=west,right=\sqlen*0.3cm]{$K$};
			\end{tikzpicture}
			\caption{Visual illustration of the change of period between $\vecrand{h}$ and $\vecdet{X}_i\vecrand{h}$.}
			\label{fig:period}
		\end{figure}
		
		This result can be easily understood by considering the transformation between $\vecrand{h}$ and $\vecrand{y}_i$, which has been illustrated in \autoref{fig:period}.
		By multiplying $\vecrand{h}$ by $\vecdet{X}_i$, we transform a \acrshort{cs} process into one with period $K$.
		Then, by adding the noise term $\vecrand{z}$, which is \acrshort{wss}, the \acrshort{cs} properties of $\vecdet{X}_i\vecrand{h}$ are preserved and $\vecrand{y}_i$ is a $K$-periodic \acrshort{cs} process.
		Its $K$-Toeplitz covariance matrix is
		\begin{equation}
			\covmat{i}\triangleq\begin{bmatrix}
				\covmat{i}(0) & \covmat{i}(-1) 	& \cdots & \covmat{i}(-\nr+1) \\
				\covmat{i}(1) & \covmat{i}(0) & \cdots & \covmat{i}(-\nr+2) \\
				\vdots	& \vdots & \ddots & \vdots \\
				\covmat{i}(\nr-1) & \covmat{i}(\nr-2) & \cdots & \covmat{i}(0)
			\end{bmatrix}, \label{eq:blocks}
		\end{equation}
		with blocks $\covmat{i}(n)\in\complex^{K\times K}$ for $n\in\{-\nr+1,\dots,\nr-1\}$.
		Note that
		\begin{equation}
			\covmat{i}(n)\equiv\herm{\covmat{i}}(-n) \label{eq:herm}.
		\end{equation}
		
		Under the \acrshort{sswmod}, $\cov{i}(n,m)$ from~\eqref{eq:y_acf} (see~\eqref{eq:Ci}) may be developed as
		\begin{equation}
			\cov{i}(n,m) = \power{\bsf{Z}}\delta_m + \smashoperator{\sum_{l,l\Prime=-\frac{\nr}{2}}^{\frac{\nr}{2}-1}}[\mathbf{X}_{i}\Ut\covmat{\bsf{H}}(l-l\Prime)\herm{\Ut}\herm{\mathbf{X}_{i}}]_{n+m-lK,n-l\Prime K},
		\end{equation}
		by applying~\eqref{eq:channel_acf_entry}.
		The obtained \acrshort{acf} can be asymptotically expressed as
		\begin{equation}
			\begin{aligned}
				\infcov{i}(n,m) &\triangleq \lim_{\nr\to\infty} \cov{i}(n,m) \\
				&= \power{\bsf{Z}}\delta_m + \smashoperator[l]{\sum_{l,l\Prime\in\integs}} \int_{0}^{\mathrlap{1}} [\mathbf{X}_{i}\Ut\cycmat{\bsf{H}}(\lambda)\herm{\Ut}\herm{\mathbf{X}_{i}}]_{n+m-lK,n-l\Prime K}\euler^{\imunit2\pi(l-l\Prime)\lambda}\der\lambda,
			\end{aligned} \label{eq:C_inf}
		\end{equation}
		where we have used~\eqref{eq:wiener}.
			
		\subsection{Spectral ML detector}
		
			Knowing the statistical properties of the received signal allows us to represent the \acrshort{ml} metrics from~\eqref{eq:ml} in the spectral domain asymptotically.
			The following formulation reaches the \acrfull{cl} domain by first obtaining the \acrfull{kl} expansion of the involved signals and then applying a unitary transformation onto it~\cite{VilaInsa2024}.
			The data-dependent and logarithmic terms from~\eqref{eq:ml_terms} will be studied separately.
			While being structurally similar to their original expressions, the spectral counterparts of~\eqref{eq:ml_terms} will feature the interplay between signal properties much more visibly and allude to the formulation of a low-complexity receiver.
		
			\subsubsection{Data-dependent term}
			
				The received signal $\{\mathsf{y}_{i}(n)\}_{n\in\integs}$ is a second-order random process and, as such, admits an asymptotic spectral representation (as $\nr\to\infty$) known as \acrshort{kl} expansion~\cite[Sec.~9.1]{Schreier2010}.
				It is defined as
				\begin{equation}
					\mathsf{y}_{i}(n) \triangleq \int_{0}^{\mathrlap{1}}\phi_i(n,\lambda) 		\der\mathring{\mathsf{y}}_{i}(\lambda), \label{eq:kl_exp}
				\end{equation}
				for $n\in\integs$ and $\lambda\in[0,1)$, where $\{\der\mathring{\mathsf{y}}_i(\lambda)\}$ are uncorrelated increments in the \acrshort{kl} spectral domain and $\{\phi_i(n,\lambda)\}$ is a set of eigenfunctions that satisfy the following \textit{orthonormality} and \textit{completeness} relations~\cite[Ch.~2]{Kennedy2013}:
				\begin{equation}
					\int_{0}^{\mathrlap{1}}\phi(n,\lambda)\phi^*(n\Prime,\lambda)\der\lambda=\delta_{n-n\Prime},\quad \sum_{n\in\integs}\phi(n,\lambda)\phi^*(n,\lambda\Prime)=\delta(\lambda-\lambda\Prime). \label{eq:ortho}
				\end{equation}
				Each class of random processes has a different eigenbasis, obtained by solving the following eigenequation:
				\begin{equation}
					\smashoperator{\sum_{l\in\integs}}\infcov{i}(l,k-l)\phi_i(l,\lambda)=\klcyc{i}(\lambda)\phi_i(k,\lambda),\label{eq:eigen}
				\end{equation}
				where $\klcyc{i}(\lambda)\der\lambda \triangleq\expec[\lvert\der\mathring{\mathsf{y}}_i(\lambda)\rvert^2]$ is the \acrshort{kl} spectrum of $\{\mathsf{y}_{i}(n)\}$.
				Since we are dealing with a \acrshort{cs} process of period $K$, its \acrshort{kl} basis is
				\begin{equation}
					\phi_i^{(k)}(n,\sigma)\triangleq\phi_i\bigl(n,\sigma+\tfrac{k}{K}\bigr)= \smashoperator{\sum_{q=0}^{K-1}}[\mathbf{B}_i(\sigma)]_{q,k} \euler^{\imunit2\pi n(\sigma+\frac{q}{K})}\label{eq:cs_basis}
				\end{equation}
				for $\sigma\in[0,1/K)$ and $k\in\{0,\dots,K-1\}$, as established in~\cite{Riba2022,VilaInsa2024}.
				These are linear combinations of complex exponentials separated by multiples of the cyclic period.
				They generalize the \acrshort{kl} basis of \acrshort{wss} processes, which is known to be the set of Fourier eigenfunctions~\cite[Ch.~9]{Schreier2010}.
				The weights that dictate~\eqref{eq:cs_basis} are elements from $\mathbf{B}_i(\sigma)\in\complex^{K\times K}$, whose columns are the eigenvectors of the \textit{\acrfull{csm}} of the received signal, constructed as
				\begin{equation}
					[\cycmat{i}(\sigma)]_{r,c} \triangleq\cyc{i}^{(\frac{r-c}{K})}\bigl(\sigma+\tfrac{r}{K}\bigr), \label{eq:csm_def}
				\end{equation}
				where
				\begin{equation}
					\cyc{i}^{(\frac{k}{K})}(\lambda) \triangleq \tfrac{1}{K}\smashoperator{\sum_{n=0\vphantom{\integs}}^{K-1}}\smashoperator[r]{\sum_{m\in\integs}}\infcov{i}(n,m)\euler^{-\imunit2\pi(n\frac{k}{K}+m\lambda)} \label{eq:csf}
				\end{equation}
				is the \textit{cyclic spectrum} of $\{\mathsf{y}_{i}(n)\}$~\cite[Ch.~10]{Schreier2010}.
				Its eigendecomposition is then
				\begin{equation}
					\cycmat{i}(\sigma) \triangleq \mathbf{B}_i(\sigma)\klcycmat{i}(\sigma)\herm{\mathbf{B}}_i(\sigma) \in \complex^{K\times K}, \label{eq:csm_eig}
				\end{equation}
				from which we also obtain the \acrshort{kl} spectrum contained in $\klcycmat{i}(\sigma)\in\reals^{K\times K}$, a diagonal matrix constructed from the eigenvalues of $\cycmat{i}(\sigma)$:
				\begin{equation}
					[\klcycmat{i}(\sigma)]_{k,k} \equiv \klcyc{i}\bigl(\sigma+\tfrac{k}{K}\bigr) \triangleq \klcyc{i}^{(k)}(\sigma).
				\end{equation}
				
				Recall the data-dependent quadratic term of the \acrshort{ml} detector~\eqref{eq:ml_terms}:
				\begin{equation}
					\mathrm{Q}_j(\{\mathsf{y}_{i}(n)\}) = \tfrac{1}{K\nr} \smashoperator{\sum_{r,c=-\frac{K\nr}{2}}^{\frac{K\nr}{2}-1}} \mathsf{y}_{i}^*(r)[\covmat{j}^{-1}]_{r,c} \mathsf{y}_{i}(c). \label{eq:quadratic}
				\end{equation}
				The following result states its limit in the large array regime using the \acrshort{kl} spectral representation.
				\begin{proposition} \label{prop:quadratic}
					When $\nr\to\infty$, the quadratic term from~\eqref{eq:ml_terms} can be expressed in the \acrshort{kl} spectral domain as
					\begin{equation}
						\overset{\infty}{\mathrm{Q}}_j(\{\mathsf{y}_{i}(n)\}) \triangleq \lim_{\nr\to\infty} \mathrm{Q}_j(\{\mathsf{y}_{i}(n)\}) = \int_{0}^{\mathrlap{\frac{1}{K}}} \der\mathring{\boldsymbol{\mathsf{y}}} \herm{\vphantom{\boldsymbol{\mathsf{y}}}}_i(\sigma)\herm{\mathbf{B}}_i(\sigma)\cycmat{j}^{-1}(\sigma)\mathbf{B}_i(\sigma) \der\mathring{\boldsymbol{\mathsf{y}}}_i(\sigma), \label{eq:quadratic_kl}
					\end{equation}
					where $\der\mathring{\boldsymbol{\mathsf{y}}}_i(\sigma)\in\complex^{K}$ is a vector constructed by stacking $K$ samples of $\{\der\mathring{\mathsf{y}}_i(\sigma)\}$ spaced $\frac{1}{K}$, \ie
					\begin{equation}
						\der\mathring{\boldsymbol{\mathsf{y}}}_i(\sigma) \triangleq \trans{\bigl[\der\mathring{\mathsf{y}}_i(\sigma),\der\mathring{\mathsf{y}}_i\bigl(\sigma+\tfrac{1}{K}\bigr),\dots,\der\mathring{\mathsf{y}}_i\bigl(\sigma+\tfrac{K-1}{K}\bigr)\bigr]}.
					\end{equation}
				\end{proposition}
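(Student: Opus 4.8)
The plan is to substitute the \acrshort{kl} expansion~\eqref{eq:kl_exp} of $\{\mathsf{y}_i(n)\}$ into the quadratic form~\eqref{eq:quadratic}, replace $\covmat{j}^{-1}$ by its asymptotic spectral kernel, and collapse the resulting multiple integral through the harmonic structure of the \acrshort{cs} eigenbasis~\eqref{eq:cs_basis}. The key ingredient is that, since $\covmat{j}$ is $K$-Toeplitz, the extension of the Szeg\"o theorem of~\cite{Riba2022,VilaInsa2024} that produces the eigenequation~\eqref{eq:eigen} also yields, as $\nr\to\infty$, the kernel representation of the inverse,
\[
	[\covmat{j}^{-1}]_{r,c} \;\longrightarrow\; \int_{0}^{1} \frac{\phi_j(r,\lambda)\phi_j^*(c,\lambda)}{\klcyc{j}(\lambda)}\,\der\lambda ,
\]
which generalizes the classical Fourier-domain expression for the inverse covariance of a \acrshort{wss} process. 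Here $\klcyc{j}(\lambda)\ge\power{\bsf{Z}}>0$ (by~\eqref{eq:C_inf}, the \acrshort{kl} spectrum is the noise floor $\power{\bsf{Z}}$ plus a nonnegative signal term), which keeps the inverse \acrshort{kl} spectrum bounded, and the per-dimension factor $\tfrac{1}{K\nr}$ is absorbed into the normalization of the orthogonal increments $\{\der\mathring{\mathsf{y}}_i(\lambda)\}$, exactly as in the cited references, so that $\mathrm{Q}_j$ converges to a genuine Riemann--Stieltjes spectral integral.

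Substituting $\mathsf{y}_i(n)=\int_0^1\phi_i(n,\mu)\,\der\mathring{\mathsf{y}}_i(\mu)$ and the kernel above into~\eqref{eq:quadratic}, $\overset{\infty}{\mathrm{Q}}_j$ becomes a double stochastic integral in $\mu,\nu$ against $\der\mathring{\mathsf{y}}_i^*(\mu)\,\der\mathring{\mathsf{y}}_i(\nu)$ of
\[
	\int_{0}^{1} \frac{1}{\klcyc{j}(\lambda)}\Bigl(\smashoperator{\sum_{r\in\integs}}\phi_i^*(r,\mu)\phi_j(r,\lambda)\Bigr)\Bigl(\smashoperator{\sum_{c\in\integs}}\phi_i(c,\nu)\phi_j^*(c,\lambda)\Bigr)\der\lambda .
\]
To evaluate the two inner sums I would write every frequency of $[0,1)$ as $\sigma+\tfrac{k}{K}$ with $\sigma\in[0,\tfrac1K)$ and $k\in\{0,\dots,K-1\}$ and invoke~\eqref{eq:cs_basis}: since $\sum_{n\in\integs}\euler^{\imunit2\pi na}$ is a Dirac comb and, for $\sigma$-offsets confined to $[0,\tfrac1K)$, the only admissible frequency coincidence forces the two harmonic indices to agree, each sum collapses to a Dirac delta in $\sigma$ weighted by a Gram matrix, namely $\sum_{n\in\integs}\phi_i^*(n,\sigma+\tfrac{k}{K})\phi_j(n,\sigma\Prime+\tfrac{k\Prime}{K})=\delta(\sigma-\sigma\Prime)[\herm{\mathbf{B}}_i(\sigma)\mathbf{B}_j(\sigma)]_{k,k\Prime}$, and likewise for the other sum with $\herm{\mathbf{B}}_j(\sigma)\mathbf{B}_i(\sigma)$. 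These two deltas collapse the $\mu$- and $\nu$-integrals onto the $\lambda$-integral, reducing $\overset{\infty}{\mathrm{Q}}_j$ to a single integral over $\sigma\in[0,\tfrac1K)$ and three finite sums over the harmonic indices $k_\mu,k_\lambda,k_\nu$.

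Finally, performing the sum over $k_\lambda$ and using that $\klcycmat{j}(\sigma)$ is diagonal gives
\[
	\smashoperator{\sum_{k_\lambda=0}^{K-1}} [\herm{\mathbf{B}}_i\mathbf{B}_j]_{k_\mu,k_\lambda}\,[\klcycmat{j}^{-1}(\sigma)]_{k_\lambda,k_\lambda}\,[\herm{\mathbf{B}}_j\mathbf{B}_i]_{k_\lambda,k_\nu} = \bigl[\herm{\mathbf{B}}_i(\sigma)\,\mathbf{B}_j(\sigma)\klcycmat{j}^{-1}(\sigma)\herm{\mathbf{B}}_j(\sigma)\,\mathbf{B}_i(\sigma)\bigr]_{k_\mu,k_\nu} ,
\]
and since $\mathbf{B}_j(\sigma)$ is unitary,~\eqref{eq:csm_eig} implies $\mathbf{B}_j(\sigma)\klcycmat{j}^{-1}(\sigma)\herm{\mathbf{B}}_j(\sigma)=\cycmat{j}^{-1}(\sigma)$, so the bracket equals $[\herm{\mathbf{B}}_i(\sigma)\cycmat{j}^{-1}(\sigma)\mathbf{B}_i(\sigma)]_{k_\mu,k_\nu}$. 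Stacking the $K$ increments $\{\der\mathring{\mathsf{y}}_i(\sigma+\tfrac{k}{K})\}_k$ into $\der\mathring{\boldsymbol{\mathsf{y}}}_i(\sigma)$ turns the remaining double sum over $k_\mu,k_\nu$ into the matrix--vector quadratic form on the right-hand side of~\eqref{eq:quadratic_kl}, completing the argument.

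I expect the main obstacle to be the passage to the limit in the first step: rigorously establishing the asymptotic inverse-kernel form of $\covmat{j}^{-1}$ and justifying the interchange of $\lim_{\nr\to\infty}$ with the (ultimately infinite) summations and integrations. This is precisely where the extended Szeg\"o machinery of~\cite{Riba2022,VilaInsa2024} is indispensable --- one bounds the discrepancy between $\covmat{j}^{-1}$ and the kernel operator uniformly enough for \acrshort{ldct} to apply, exploiting the strictly positive lower bound $\klcyc{j}(\lambda)\ge\power{\bsf{Z}}$ on the \acrshort{kl} spectrum and the uniform boundedness of the eigenfunctions. Once that convergence is secured, the cross-basis delta calculus and the $\mathbf{B}_j$-unitarity contraction are purely algebraic.
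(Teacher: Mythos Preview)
Your proposal is correct and follows essentially the same route as the paper's proof: both introduce the asymptotic inverse kernel (the paper calls it the \emph{precision function} $\Omega_j$) $\int_0^1\phi_j(r,\lambda)\klcyc{j}^{-1}(\lambda)\phi_j^*(c,\lambda)\,\der\lambda$, substitute the \acrshort{kl} expansion of $\{\mathsf{y}_i(n)\}$, evaluate the cross-basis inner products $\langle\phi_i,\phi_j\rangle$ as $\delta(\sigma-\sigma')[\herm{\mathbf{B}}_i\mathbf{B}_j]_{k,k'}$, and collapse via sifting and~\eqref{eq:csm_eig}. Your discussion of the limit justification is in fact more candid than the paper's, which simply verifies that the proposed $\Omega_j$ satisfies the defining inverse relation~\eqref{eq:cov_prec} without detailing the Szeg\"o-type convergence.
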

				\begin{proof}
					See \autoref{app:proof_quadratic}.
				\end{proof}
				
				Notice the structural resemblances between the original quadratic form~\eqref{eq:ml_terms} and the one in the \acrshort{kl} spectral domain~\eqref{eq:quadratic_kl}.
				We can now use results from~\cite{VilaInsa2024} to obtain it as a more compact expression.
				The \acrshort{cl} representation of $\{\mathsf{y}_{i}(n)\}$ is defined as
				\begin{equation}
					\der\breve{\mathsf{y}}_{i}(\lambda) \triangleq \lim_{\nr\to\infty} \tfrac{1}{K\nr} \smashoperator{\sum_{n=-\frac{K\nr}{2}}^{\frac{K\nr}{2}-1}}\mathsf{y}_{i}(n)\euler^{-\imunit2\pi n\lambda},
				\end{equation}
				which we can stack in a $K$-dimensional vector $[\der\breve{\boldsymbol{\mathsf{y}}}_i(\sigma)]_r\triangleq\der\breve{\mathsf{y}}_i(\sigma+\frac{r}{K})$.
				By~\cite[Corollary~1]{VilaInsa2024}, we know that $\der\breve{\boldsymbol{\mathsf{y}}}_i(\sigma) = \mathbf{B}_i(\sigma) \der\mathring{\boldsymbol{\mathsf{y}}}_i(\sigma)$.
				Therefore,~\eqref{eq:quadratic_kl} reduces to 
				\begin{equation}
					\overset{\infty}{\mathrm{Q}}_j(\{\mathsf{y}_{i}(n)\}) = \int_{0}^{\mathrlap{\frac{1}{K}}} \der\breve{\boldsymbol{\mathsf{y}}}\herm{\vphantom{\boldsymbol{\mathsf{y}}}}_i(\sigma) \cycmat{j}^{-1}(\sigma) \der\breve{\boldsymbol{\mathsf{y}}}_i(\sigma), \label{eq:Q_cl}
				\end{equation}
				which is the integral of a quadratic form in the \acrshort{cl} (Fourier) spectral domain.
				
				It is clear that, in order to analyze~\eqref{eq:Q_cl}, we need to obtain $\{\cycmat{j}(\sigma)\}$, which is stated next.
				\begin{proposition} \label{prop:csm}
					The \acrshort{csm} of $\{\mathsf{y}_j(n)\}$ from~\eqref{eq:y_i} under \acrshort{sswmod} is
					\begin{equation}
						\cycmat{j}(\sigma)= \herm{\mathbf{F}_K} \herm{\mathbf{\Theta}}(\sigma) \mathbf{X}_{j}\Ut\cycmat{\bsf{H}}(K\sigma)\herm{\Ut}\herm{\mathbf{X}_{j}}  \mathbf{\Theta}(\sigma) \mathbf{F}_K + \power{\bsf{Z}} \id{K}, \label{eq:csm}
					\end{equation}
					where $\mathbf{\Theta}(\sigma) \triangleq \Diag(1,\euler^{\imunit2\pi\sigma},\dots,\euler^{\imunit2\pi\sigma(K-1)})$.
				\end{proposition}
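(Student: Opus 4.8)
The plan is to identify $\cycmat{j}(\sigma)$ as a fixed unitary reshaping of the ordinary matrix \acrshort{psd} of the \emph{polyphase} (length-$K$ block) decomposition of $\{\mathsf{y}_j(n)\}$, which under the \acrshort{sswmod} is immediate to compute. First I would put the received signal in blocked form. Conditioned on the transmitted codeword being $\mathbf{X}_j$, model~\eqref{eq:ssw} gives $\bsf{Y}_j\triangleq\mathbf{X}_j\bsf{H}+\bsf{Z}=\mathbf{X}_j\Ut\mathring{\bsf{H}}\mathbf{F}_{\nr}+\bsf{Z}=\mathbf{X}_j\Ut\mathring{\bsf{G}}+\bsf{Z}$, where $\mathring{\bsf{G}}\triangleq\mathring{\bsf{H}}\mathbf{F}_{\nr}$ is the matrix whose rows are the jointly \acrshort{wss} processes $\{\mathring{\mathsf{h}}_k(n)\}$ of~\eqref{eq:channel_acf_entry}, with (diagonal) matrix \acrshort{acf} $\covmat{\bsf{H}}(m)$. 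Since $\vecrand{y}_j=\vect(\bsf{Y}_j)$ stacks the columns of $\bsf{Y}_j$, the $c$-th length-$K$ block $\bsf{y}_j(c)\triangleq\trans{[\mathsf{y}_j(cK),\dots,\mathsf{y}_j(cK+K-1)]}$ equals the $c$-th column of $\bsf{Y}_j$, namely $\bsf{y}_j(c)=\mathbf{X}_j\Ut\mathring{\bsf{g}}_c+\bsf{z}_c$ with $\mathring{\bsf{g}}_c,\bsf{z}_c$ the $c$-th columns of $\mathring{\bsf{G}},\bsf{Z}$. Hence $\{\bsf{y}_j(c)\}$ is vector-\acrshort{wss}: using independence of the rows of $\mathring{\bsf{G}}$ (so that $\expec[\mathring{\bsf{g}}_{c+m}\,\herm{\mathring{\bsf{g}}}_c]=\covmat{\bsf{H}}(m)$ by~\eqref{eq:channel_acf_entry}) together with the whiteness of the noise, $\expec[\bsf{y}_j(c+m)\,\herm{\bsf{y}}_j(c)]=\mathbf{X}_j\Ut\covmat{\bsf{H}}(m)\herm{\Ut}\herm{\mathbf{X}_j}+\power{\bsf{Z}}\delta_m\id{K}$; transforming with~\eqref{eq:channel_psd}, its matrix \acrshort{psd} is $\mathbf{P}_j(\mu)=\mathbf{X}_j\Ut\cycmat{\bsf{H}}(\mu)\herm{\Ut}\herm{\mathbf{X}_j}+\power{\bsf{Z}}\id{K}$.

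The second and central step links this matrix \acrshort{psd} to the \acrshort{csm}~\eqref{eq:csm_def}. Substituting the asymptotic \acrshort{acf}~\eqref{eq:C_inf} into the cyclic spectrum~\eqref{eq:csf}, I would first collapse the double sum over $l,l'$: writing $n+m=pK+t$ with $t\in\{0,\dots,K-1\}$ and $p\in\integs$, the only summand compatible with both matrix indices lying in $\{0,\dots,K-1\}$ has $l'=0$ and $l=p$, so that $\infcov{j}(n,m)$ equals the $(t,n)$-entry of the block \acrshort{acf} of the first step evaluated at block-lag $p$ (here $\int_{0}^{1}\cycmat{\bsf{H}}(\lambda)\euler^{\imunit2\pi p\lambda}\der\lambda=\covmat{\bsf{H}}(p)$). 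Feeding this into~\eqref{eq:csf} and~\eqref{eq:csm_def}, the summation over the block lag $p$ produces (via Poisson summation, rigorized by the $\nr\to\infty$ Dirichlet-kernel argument already used for \autoref{prop:channel}) a Dirac comb that evaluates $\cycmat{\bsf{H}}$ at the rescaled frequency $K\sigma$, while the remaining complex exponentials in $n$ and $t$ — together with the $\tfrac{1}{K}$ prefactor split as $\tfrac{1}{\sqrt{K}}\cdot\tfrac{1}{\sqrt{K}}$ — assemble into $\herm{\mathbf{F}_K}$ on the left, $\mathbf{F}_K$ on the right, and the phase diagonal $\mathbf{\Theta}(\sigma)$ (the footprint of the $0,1,\dots,K-1$ polyphase shifts) bracketing the core. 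Equivalently, this step is the specialization to $\{\mathsf{y}_j(n)\}$ of the general \acrshort{csm}--polyphase correspondence of~\cite{Riba2022,VilaInsa2024}. The outcome is $\cycmat{j}(\sigma)=\herm{\mathbf{F}_K}\herm{\mathbf{\Theta}}(\sigma)\mathbf{P}_j(K\sigma)\mathbf{\Theta}(\sigma)\mathbf{F}_K$.

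Finally I would substitute $\mathbf{P}_j(K\sigma)$ from the first step and use that $\mathbf{F}_K$ and $\mathbf{\Theta}(\sigma)$ are unitary, so that the white part is invariant, $\herm{\mathbf{F}_K}\herm{\mathbf{\Theta}}(\sigma)\,\power{\bsf{Z}}\id{K}\,\mathbf{\Theta}(\sigma)\mathbf{F}_K=\power{\bsf{Z}}\id{K}$, which yields~\eqref{eq:csm}. The hard part is the bookkeeping in the second step: tracking the complex exponentials and the $\tfrac{1}{K}$ normalization so that the $\herm{\mathbf{F}_K}(\placeholder)\mathbf{F}_K$ sandwich, the phase diagonals $\mathbf{\Theta}(\sigma)$, and the frequency rescaling $\mu\mapsto K\sigma$ all appear with the correct conjugation and transposition conventions, and justifying the interchange of the (asymptotic) doubly-infinite summations with the frequency integral — both of which become routine once~\eqref{eq:C_inf} is recognized as the block \acrshort{acf} of the vector-\acrshort{wss} process of the first step.
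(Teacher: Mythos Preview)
Your proposal is correct. The direct computation you sketch in the second step --- collapsing the $(l,l')$ sum via the index constraint that forces $l'=0$ and $l=p$, turning the remaining sum over block-lags into a Dirac comb that pins $\lambda=K\sigma$, and assembling the leftover exponentials into the $\herm{\mathbf{F}_K}\herm{\mathbf{\Theta}}(\sigma)(\placeholder)\mathbf{\Theta}(\sigma)\mathbf{F}_K$ sandwich --- is precisely the paper's proof, which works entry-by-entry from the cyclic-spectrum definition~\eqref{eq:csf} using steering vectors $\boldsymbol{\theta}(\sigma)\triangleq\trans{[1,\euler^{\imunit2\pi\sigma},\dots,\euler^{\imunit2\pi\sigma(K-1)}]}$ in place of your $\mathbf{\Theta}(\sigma)\mathbf{F}_K$ columns.

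Where you differ is in organization. You first identify the polyphase block process $\{\bsf{y}_j(c)\}$ as vector-\acrshort{wss} and compute its matrix \acrshort{psd} $\mathbf{P}_j(\mu)$ --- which is exactly the \acrshort{rsm} $\dot{\mathbf{C}}_j(\mu)$ that the paper introduces only \emph{after} this proposition --- and then obtain the \acrshort{csm} as a fixed unitary reshaping of it. In the paper's ordering that reshaping is \autoref{prop:log}, proved separately and downstream. Your route therefore factors the argument into a trivial model-specific step (the block \acrshort{acf} under the \acrshort{sswmod} is one line) plus a model-independent identity already available in~\cite{Riba2022,VilaInsa2024}; the paper's route is self-contained but effectively redoes that general identity inside the specific computation. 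Both land on the same expression with the same essential manipulations, so the distinction is structural rather than mathematical.
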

				\begin{proof}
					See \autoref{app:csm}.
				\end{proof}
				Let $\mathbf{G}_j(\sigma)\in\complex^{K\times K}$ be the signal contribution in~\eqref{eq:csm}, \ie $\mathbf{G}_j(\sigma)\triangleq\cycmat{j}(\sigma)-\power{\bsf{Z}}\id{K}$.
				Notice that its rank is $\leq\nt$.
				A similar idea was encountered in~\cite{Riba2010}, where it was shown that linearly modulated \acrfull{pam} signals exhibit a rank-1 structure in the \acrshort{cl} domain, leaving margin for noise power estimation in the null space in the case of oversampling.
				Hence, a clear duality manifests between the \textit{time} and \textit{combined time-space} domains.
				For instance, in a \acrshort{simo} scheme (see~\cite{VilaInsa2024b}) $\rank(\mathbf{G}_j(\sigma))=1$, implying that $\{\mathsf{y}_i(n)\}$ in~\eqref{eq:vec} would adopt a structure in combined time-space equivalent to a \acrshort{pam} in the time domain.
				For a general \acrshort{mimo} case with $\nt>1$, the signal received with the $\nr$ antennas is being treated herein as a linear combination of \acrshort{pam} signals, and this is what leads to an upper bound of $\nt$ for the rank of $\mathbf{G}_j(\sigma)$.
				
			\subsubsection{Logarithmic term}
				
				We have successfully expressed the data-dependent part of the \acrshort{ml} detector in the \acrshort{cl} spectral domain.
				To do so with the remaining logarithmic term, we will use some results on the asymptotic log-determinant of $K$-Toeplitz matrices~\cite{Widom1974}.
				It is known that
				\begin{equation}
					\overset{\infty}{\ell}_{j} \triangleq \lim_{\nr\to\infty} \ell_{j} = \lim_{\nr\to\infty} \tfrac{1}{K\nr}\ln\lvert\covmat{j}\rvert = \tfrac{1}{K}\int_{0}^{\mathrlap{1}} \ln\lvert\dot{\mathbf{C}}_{j}(\lambda)\rvert\der\lambda, \label{eq:asym_log_det}
				\end{equation}
				where
				\begin{equation}
					\dot{\mathbf{C}}_{j}(\lambda) \triangleq \smashoperator{\sum_{n\in\integs}} \covmat{j}(n)\euler^{-\imunit2\pi n\lambda}
				\end{equation}
				is the discrete-time Fourier transform of the $K\times K$ blocks defined in~\eqref{eq:blocks}.
				For the purposes of this work, it will be referred to as \textit{\acrfull{rsm}}~\cite[Sec.~1.4.1.3]{Hlawatsch2011}.
				We next state how it can be obtained from a unitary transformation of the \acrshort{csm} of $\{\mathsf{y}_{j}(n)\}$.
				\begin{proposition} \label{prop:log}
					The \acrshort{rsm} and \acrshort{csm} of a \acrshort{cs} process $\{\mathsf{y}_{j}(n)\}$ of period $K$ are related by
					\begin{equation}
						\cycmat{j}(\lambda) = \herm{\mathbf{F}}_K\herm{\mathbf{\Theta}}(\lambda)\dot{\mathbf{C}}_{j}(K\lambda)\mathbf{\Theta}(\lambda)\mathbf{F}_K.\label{eq:thm1}
					\end{equation}
				\end{proposition}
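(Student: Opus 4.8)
The plan is to establish \eqref{eq:thm1} entrywise, treating it as a purely Fourier-bookkeeping identity between two block-diagonalizations of the same $K$-Toeplitz covariance; no limiting argument is needed, since the statement concerns a generic period-$K$ process. First I would read off how the $K\times K$ blocks $\covmat{j}(n)$ encode the \acrshort{acf}: combining the block layout \eqref{eq:blocks} with the $K$-periodicity of $\cov{j}(n,m)$ gives
\[
 [\covmat{j}(n)]_{a,b} = \cov{j}(b,\,a-b+nK), \qquad a,b\in\{0,\dots,K-1\},\ n\in\integs .
\]
Substituting this into the definition of $\dot{\mathbf{C}}_j$ and changing the summation index to $m = a-b+nK$ recasts $[\dot{\mathbf{C}}_j(K\lambda)]_{a,b}$ as $\sum_{m}\cov{j}(b,m)\,\euler^{-\imunit2\pi(m-a+b)\lambda}$, where $m$ runs over the integers congruent to $a-b$ modulo $K$.

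Second, I would expand the right-hand side of \eqref{eq:thm1} at its $(r,c)$ entry, using $[\mathbf{F}_K]_{b,c}=\euler^{\imunit2\pi bc/K}/\sqrt{K}$ and $[\mathbf{\Theta}(\lambda)]_{b,b}=\euler^{\imunit2\pi\lambda b}$. The two factors $1/\sqrt{K}$ coming from $\herm{\mathbf{F}}_K$ and $\mathbf{F}_K$ reconstruct exactly the $1/K$ that normalizes the cyclic spectrum in \eqref{eq:csf}. The crucial cancellation is among the $\lambda$-dependent exponentials: the phase $\euler^{-\imunit2\pi\lambda a}$ from $\herm{\mathbf{\Theta}}(\lambda)$, the factor $\euler^{-\imunit2\pi(m-a+b)\lambda}$ carried inside $\dot{\mathbf{C}}_j(K\lambda)$, and $\euler^{\imunit2\pi\lambda b}$ from $\mathbf{\Theta}(\lambda)$ telescope to $\euler^{-\imunit2\pi m\lambda}$ — which is precisely why the argument dilation $K\lambda$ in $\dot{\mathbf{C}}_j$ is absorbed by the two phase matrices. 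What is left is $\tfrac{1}{K}$ times a sum over $a,b\in\{0,\dots,K-1\}$ and over $m$ congruent to $a-b$ modulo $K$ of $\euler^{-\imunit2\pi ar/K}\,\cov{j}(b,m)\,\euler^{-\imunit2\pi m\lambda}\,\euler^{\imunit2\pi bc/K}$.

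Third, I would resolve the congruence constraint: for fixed $b$ and $m$ there is a unique $a\in\{0,\dots,K-1\}$ with $a\equiv m+b\pmod{K}$, so the $a$-summation collapses and $\euler^{-\imunit2\pi ar/K}$ may be replaced by $\euler^{-\imunit2\pi(m+b)r/K}$. Regrouping the surviving exponentials as $\euler^{-\imunit2\pi b(r-c)/K}\,\euler^{-\imunit2\pi m(\lambda+r/K)}$ displays the $(r,c)$ entry of the right-hand side as $\cyc{j}^{(\frac{r-c}{K})}\bigl(\lambda+\tfrac{r}{K}\bigr)$, which by \eqref{eq:csm_def} is exactly $[\cycmat{j}(\lambda)]_{r,c}$; this finishes the proof.

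I expect the only genuinely delicate point to be the index discipline of the first step: correctly reading $[\covmat{j}(n)]_{a,b}$ off the $K$-Toeplitz structure as a value of $\cov{j}$ (which hides a row/column transposition and one use of $K$-periodicity), and then matching the residue of $a-b$ modulo $K$ against the block index $n$. Once those conventions are pinned down the remainder is mechanical phase arithmetic, with no convergence concern since the manipulation is blockwise and finite-dimensional. It is also worth noting in passing that, as $\mathbf{\Theta}(\lambda)\mathbf{F}_K$ is unitary, \eqref{eq:thm1} exhibits $\dot{\mathbf{C}}_j(K\lambda)$ and $\cycmat{j}(\lambda)$ as unitarily equivalent, hence sharing eigenvalues — the fact that makes \eqref{eq:asym_log_det} computable through the eigendecomposition \eqref{eq:csm_eig}.
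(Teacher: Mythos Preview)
Your proposal is correct and follows essentially the same strategy as the paper: a direct entrywise verification that unpacks both matrix-valued spectra in terms of the \acrshort{acf} and matches phases. The only cosmetic difference is direction—the paper starts from $[\cycmat{j}(\lambda)]_{r,c}$ via \eqref{eq:csm_def}--\eqref{eq:csf}, splits the time index modulo $K$, and recognizes $[\dot{\mathbf{C}}_j(K\lambda)]_{p,n}$ inside, whereas you start from the blocks $\covmat{j}(n)$, build $\dot{\mathbf{C}}_j(K\lambda)$, and then expand the right-hand side of \eqref{eq:thm1} to land on \eqref{eq:csm_def}; the index bookkeeping (your congruence resolution on $a$, the paper's decomposition $l=p+lK$) is the same step read in opposite directions.
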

				\begin{proof}
					By the definition of the \acrshort{csm}~\eqref{eq:csm_def}, we know that
					\begin{equation}
						[\cycmat{j}(\lambda)]_{r,c} = \frac{1}{K} \sum_{n=0}^{K-1}\sum_{l\in\integs} \expec[\mathsf{y}_{j}(l)\mathsf{y}_{j}^*(n)] \euler^{\imunit2\pi(n(\lambda+\frac{c}{K})-l(\lambda+\frac{r}{K}))}.
					\end{equation}
					With some algebraic manipulations, we can express it in terms of the \acrshort{rsm}:
					\begin{align}
						[\cycmat{j}(\lambda)]_{r,c} &= \frac{1}{K} \smashoperator[r]{\sum_{n,p=0}^{K-1}} \euler^{\imunit2\pi(n(\lambda+\frac{c}{K})-p(\lambda+\frac{r}{K}))} \smashoperator{\sum_{l\in\integs}} \expec[\mathsf{y}_{j}(p+lK)\mathsf{y}_{j}^*(n)] \euler^{-\imunit2\pi lK(\lambda+\frac{r}{K})} \nonumber\\
						&= \frac{1}{K} \sum_{n,p=0}^{K-1} \euler^{-\imunit2\pi p(\lambda+\frac{r}{K})} [\dot{\mathbf{C}}_{j}(K\lambda)]_{p,n} \euler^{\imunit2\pi n(\lambda+\frac{c}{K})}.
					\end{align}
					We have used the fact that, by definition, the \acrshort{rsm} is periodic in $\lambda$ with period 1.
				\end{proof}
				
				Applying \autoref{prop:log} onto~\eqref{eq:asym_log_det}, we can represent the asymptotic logarithmic term as a function of the \acrshort{csm}:
				\begin{equation}
					\begin{aligned}
						\overset{\infty}{\ell}_{j} &= \tfrac{1}{K}\int_{0}^{\mathrlap{1}} \ln\bigl\lvert \mathbf{\Theta}\bigl(\tfrac{\lambda}{K}\bigr) \mathbf{F}_K \cycmat{j}\bigl(\tfrac{\lambda}{K}\bigr) \herm{\mathbf{F}}_K \herm{\mathbf{\Theta}}\big(\tfrac{\lambda}{K}\bigr) \bigr\rvert\der\lambda \\
						& = \tfrac{1}{K}\int_{0}^{\mathrlap{1}} \ln\bigl\lvert \cycmat{j}\bigl(\tfrac{\lambda}{K}\bigr)  \bigr\rvert\der\lambda = \int_{0}^{\mathrlap{\frac{1}{K}}} \ln\lvert \cycmat{j}(\sigma)  \rvert\der\sigma.
					\end{aligned}
				\end{equation}
				
			\subsubsection{ML metrics in the spectral domain}\label{sssec:ml_metrics}
				
				Using the results derived in the previous sections, we now display the total form of the asymptotic \acrshort{ml} metrics in the spectral domain:
				\begin{equation}
					\overset{\infty}{\mathcal{L}}_{j}(\{\mathsf{y}_{i}(n)\}) \triangleq \lim_{\nr\to\infty} \mathcal{L}_{j}(\vecrand{y}\vert\mathbf{X}_{i}) = \int_{0}^{\mathrlap{\frac{1}{K}}} 	\der\breve{\boldsymbol{\mathsf{y}}}\herm{\vphantom{\boldsymbol{\mathsf{y}}}}_i(\sigma) \cycmat{j}^{-1}(\sigma) \der\breve{\boldsymbol{\mathsf{y}}}_i(\sigma) + \ln\lvert\cycmat{j}(\sigma)\rvert\der\sigma. \label{eq:full_ml_spectral}
				\end{equation}
				Notice the structural similarities between this expression and the one from~\eqref{eq:ml}.
				
				In detection theory, the \acrfull{kld} between two hypotheses $\mathbf{X}_{i}$ and $\mathbf{X}_{j}$ plays a fundamental role in assessing their discrimination.
				It dictates the fastest error probability decay rate that can be achieved by any statistical test based on observations of $\{[\vecrand{y}\vert\mathbf{X}_i]_n\}$ as $\nr\to\infty$~\cite[Sec.~10.4]{Levy2008}.
				To obtain it, we first define the normalized \acrfull{llr} between the distributions of $\bsf{Y}\vert\mathbf{X}_{i}$ and $\bsf{Y}\vert\mathbf{X}_{j}$ as
				\begin{equation}
					\mathrm{L}_{i,j}^{(K\nr)}(\vecdet{y})\triangleq\tfrac{1}{K\nr}\ln\frac{\mathrm{f}_{\bsf{Y}\vert\mathbf{X}_i}(\vecdet{y})}{\mathrm{f}_{\bsf{Y}\vert\mathbf{X}_j}(\vecdet{y})}. \label{eq:llr}
				\end{equation}
				Then, the normalized \acrshort{kld} is
				\begin{equation}
					\mathcal{D}_{\mathrm{KL}}(i\Vert j) \triangleq \expec\bigl[ \mathrm{L}_{i,j}^{(K\nr)}(\vecrand{y}\vert\mathbf{X}_{i}) \bigr] = \expec\bigl[\mathcal{L}_{j}(\vecrand{y}\vert\mathbf{X}_{i}) - \mathcal{L}_{i}(\vecrand{y}\vert\mathbf{X}_{i})\bigr], \label{eq:kld}
				\end{equation}
				\ie the expectation of their \acrshort{llr} when $\mathbf{X}_i$ has been transmitted.
				It yields the value towards which $\mathrm{L}_{i,j}^{(K\nr)}(\vecrand{y}\vert\mathbf{X}_{i})$ converges asymptotically (in the mean-square sense), so larger values of $\mathcal{D}_{\mathrm{KL}}(i\Vert j)$ are usually linked to improved separation of the two hypotheses.
				
				Having derived~\eqref{eq:full_ml_spectral}, we can state the normalized \acrshort{kld} asymptotically in the \acrshort{cl} spectral domain:
				\begin{equation}
					\begin{aligned}
						\overset{\infty}{\mathcal{D}}_{\mathrm{KL}}(i\Vert j) &\triangleq \lim_{\nr\to\infty} \mathcal{D}_{\mathrm{KL}}(i\Vert j) = \expec\Bigl[\overset{\infty}{\mathrm{Q}}_{j}(\{\mathsf{y}_{i}(n)\})-\overset{\infty}{\mathrm{Q}}_{i}(\{\mathsf{y}_{i}(n)\})\Bigr] + \overset{\infty}{\ell}_{j} - \overset{\infty}{\ell}_{i} \\
						&= \int_{0}^{\mathrlap{\frac{1}{K}}} 	\expec\bigl[\der\breve{\boldsymbol{\mathsf{y}}}\herm{\vphantom{\boldsymbol{\mathsf{y}}}}_i(\sigma) \bigl( \cycmat{j}^{-1}(\sigma) - \cycmat{i}^{-1}(\sigma)\bigl) \der\breve{\boldsymbol{\mathsf{y}}}_i(\sigma)\bigr] + \ln\lvert\cycmat{j}(\sigma)\rvert-\ln\lvert\cycmat{i}(\sigma)\rvert\der\sigma.
					\end{aligned}
				\end{equation}
				The first term is obtained by using the cyclic property and linearity of the trace, as well as some identities from~\cite{VilaInsa2024}:
				\begin{equation}
					\overset{\infty}{\mathcal{D}}_{\mathrm{KL}}(i\Vert j) = \int_{0}^{\mathrlap{\frac{1}{K}}} \der\sigma \biggl(\trace\bigl[ \cycmat{j}^{-1}(\sigma)\cycmat{i}(\sigma) - \id{K}\bigr] - \ln\frac{\lvert\cycmat{i}(\sigma)\rvert}{\lvert\cycmat{j}(\sigma)\rvert}\biggr). \label{eq:kld_spectral}
				\end{equation}
				As expected, this expression is reminiscent of the \acrshort{kld} between two zero-mean multivariate normal distributions, as noted by the structure of its integrand~\cite[Sec.~3.2]{Levy2008}.
				Representing $\{\cycmat{i}(\sigma),\cycmat{j}(\sigma)\}$ in terms of their eigendecompositions (through the use of \autoref{prop:csm}), will prove useful in \autoref{sec:structure} to examine the fundamental aspects of the codeword detection problem.
			
		\subsection{Low-complexity implementation of ML detection} \label{ssec:low_complex}
		
			The main advantage of using the spectral representation of the \acrshort{ml} detector derived previously is that it can be approximated efficiently for finite $\nr$.
			Throughout this section, we assume the receiver is aware of $\{\covmat{j}\}$ for all  $\mathbf{X}_{j}\in\mathcal{X}$.
			The acquisition of this second-order statistical \acrshort{csi} is a topic of active research and outside the scope of this work~\cite{Yang2023}.
			Nevertheless, it being available at the \acrshort{bs} is a reasonable premise due to its slow-varying nature\footnote{This long-term \acrshort{csi} includes those parameters related to the channel geometry.}~\cite[Sec.~4.5]{Heath2018} and the fact it can be estimated in the presence of data~\cite{Kim2008}.
			
			\subsubsection{Received data} \label{sssec:rx_data}
			
				To implement the \acrshort{ml} detector in the spectral domain, the receiver must first obtain the \acrshort{cl} expansion of the received signal, $\{\der\breve{\boldsymbol{\mathsf{y}}}_i(\sigma)\}$.
				It can be approximated by
				\begin{equation}
					\bigl[\der\widehat{\boldsymbol{\mathsf{y}}}_i\bigl(\tfrac{l}{K\nr}\bigr)\bigr]_k \triangleq \tfrac{1}{K\nr}\fft_{K\nr}[\mathsf{y}_i(n)](l+k\nr), \label{eq:fft_cl}
				\end{equation}
				where $\fft_{K\nr}[\placeholder]$ is the \acrfull{fft} of $K\nr$ points.
				Therefore, the set of $\{\der\widehat{\boldsymbol{\mathsf{y}}}_i(\frac{l}{K\nr})\}_{l\in\{0,\dots,\nr-1\}}$ can be computed with $O(K\nr\ln(K\nr))$ operations.
				Then, the quadratic form in~\eqref{eq:full_ml_spectral} is approximated as
				\begin{equation}
					\widehat{\mathrm{Q}}_j(\{\mathsf{y}_{i}(n)\}) \triangleq \smashoperator{\sum_{l=0}^{\nr-1}} \der\widehat{\boldsymbol{\mathsf{y}}}\herm{\vphantom{\boldsymbol{\mathsf{y}}}}_i\bigl(\tfrac{l}{K\nr}\bigr) \widehat{\mathbf{S}}_{j}^{-1}\bigl(\tfrac{l}{K\nr}\bigr) \der\widehat{\boldsymbol{\mathsf{y}}}_i\bigl(\tfrac{l}{K\nr}\bigr) \label{eq:fft_Q}
				\end{equation}
				with $O(\nr K^2)$ operations, where $\widehat{\mathbf{S}}_{j}\bigl(\frac{l}{K\nr}\bigr)$ is an estimation of $\cycmat{j}\bigl(\frac{l}{K\nr}\bigr)$.
			
			\subsubsection{Statistics} \label{sssec:statistics}
			
				Beside processing the received data with~\eqref{eq:fft_cl} and~\eqref{eq:fft_Q}, approximating the \acrshort{ml} metrics in spectral form requires the computation of various statistical structures from $\{\covmat{j}\}$.
				While direct implementation of the \acrshort{ml} detector~\eqref{eq:ml} demands $\{\covmat{j}^{-1}\}$, with a cost of $O(K^3\nr^3)$ operations for each $\mathbf{X}_{j}\in\mathcal{X}$, the spectral \acrshort{ml} approximation uses inverse \acrshort{csm}s.
				
				The set of $\{\widehat{\mathbf{S}}_{j}^{-1}(\frac{l}{K\nr})\}$ can be obtained as follows.
				Recall the \acrshort{rsm} is computed as
				\begin{equation}
					\dot{\mathbf{C}}_{j}(\lambda) = \smashoperator{\sum_{n=0}^{\infty}} \covmat{j}(n) \euler^{-\imunit2\pi n\lambda} + \smashoperator{\sum_{n\Prime=-\infty}^{-1}} \covmat{j}(n\Prime) \euler^{-\imunit2\pi n\Prime\lambda}.
				\end{equation}
				With some simple algebraic manipulations and considering~\eqref{eq:herm}, it may be obtained as
				\begin{equation}
					\dot{\mathbf{C}}_{j}(\lambda) = \smashoperator{\sum_{n=0}^{\infty}} \covmat{j}(n) \euler^{-\imunit2\pi n\lambda} + \herm{\biggl(\smashoperator[r]{\sum_{n\Prime=0}^{\infty}} \covmat{j}(n\Prime) \euler^{-\imunit2\pi n\Prime\lambda}\biggr)} - \covmat{j}(0).
				\end{equation}
				Therefore, it is straightforward to see it can be approximated as
				\begin{equation}
					\widehat{\dot{\mathbf{C}}}_{j}\bigl(\tfrac{l}{\nr}\bigr) \triangleq \fft_{\nr}[\covmat{j}(n)](l) + \fft_{\nr}\herm{[\covmat{j}(n)]}(l) - \covmat{j}(0), \label{eq:rsm_impl}
				\end{equation}
				with which the \acrshort{csm} is estimated as
				\begin{equation}
					\widehat{\mathbf{S}}_j\bigl(\tfrac{l}{K\nr}\bigr) \triangleq \herm{\mathbf{F}}_K\herm{\mathbf{\Theta}}\bigl(\tfrac{l}{K\nr}\bigr)\widehat{\dot{\mathbf{C}}}_{j}\bigl(\tfrac{l}{\nr}\bigr)\mathbf{\Theta}\bigl(\tfrac{l}{K\nr}\bigr)\mathbf{F}_K,\label{eq:fft_S}
				\end{equation}
				based on \autoref{prop:log}.
				Notice that this expression yields a hermitian matrix but does not guarantee it to be positive semi-definite\footnote{We know from~\cite{VilaInsa2024} that $\cycmat{j}(\sigma)$ is a covariance matrix and thus, positive semi-definite.} in general for finite $\nr$.
				Therefore, following the same idea behind the \textit{Blackman--Tukey spectral estimator}~\cite[Sec.~4.5]{Kay1988}, we weight the correlation blocks in~\eqref{eq:rsm_impl} with a triangular window:
				\begin{equation}
					\widehat{\dot{\mathbf{C}}}\vphantom{\mathbf{C}}_{j}^{(\mathrm{BT})}\bigl(\tfrac{l}{\nr}\bigr) = \fft_{\nr}\bigl[\bigl(1-\tfrac{n}{\nr}\bigr)\covmat{j}(n)\bigr](l) + \fft_{\nr}\herm{\bigl[\bigl(1-\tfrac{n}{\nr}\bigr)\covmat{j}(n)\bigr]}(l) - \covmat{j}(0). \label{eq:rsm_bt}
				\end{equation}
				
				In summary, computing $\{\widehat{\mathbf{S}}_{j}^{-1}(\frac{l}{K\nr})\}_{l\in\{0,\dots,\nr-1\}}$ involves $O(K^2\nr\ln\nr)$ operations for the \acrshort{rsm}, plus $O(K^3\nr)$ additional for~\eqref{eq:fft_S} and the matrix inversions.
				Fortunately, these statistics can be pre-calculated at the receiver before data transmission and are valid for the detection of a certain number $T$ of consecutive codewords, in which the channel statistics remain stable~\cite[Sec.~4.5]{Heath2018}.
				The same can be said for the logarithmic terms, whose direct computation involves $O(K^3\nr^3)$ operations each, whereas their spectral approximations
				\begin{equation}
					\widehat{\ell}_{j} = \smashoperator{\sum_{l=0}^{\nr-1}}\tfrac{1}{K\nr} \ln\bigl\lvert\widehat{\mathbf{S}}_{j}\bigl(\tfrac{l}{K\nr}\bigr)\bigr\rvert
				\end{equation}
				are computed with just $O(K^3\nr)$.
				
			\subsubsection{Approximation validity}\label{sssec:approx}
			
				\begin{table}[ht]
					\centering
					\normalsize
					\caption{Computational complexity of \acrshort{ml} detection.}\label{tab:complex}
					\begin{tabular}{| c | c | c |}
						\toprule
						\multicolumn{3}{|c|}{\textbf{Once every $T$ codewords}} \\
						\midrule
						& \textbf{Direct \acrshort{ml}} & \textbf{Spectral \acrshort{ml} approx.} \\
						\midrule
						$\{\covmat{j}^{-1}\}\,\vert\,\{\cycmat{j}^{-1}(\sigma)\}$ & $O(MK^3\nr^3)$ & $O(MK^2\nr(\ln\nr+K))$ \\
						\midrule
						$\{\ell_{j}\}$ & $O(MK^3\nr^3)$ & $O(MK^3\nr)$ \\
						\midrule
						\midrule
						\multicolumn{3}{|c|}{\textbf{Once for each codeword detection}} \\
						\midrule
						& \textbf{Direct \acrshort{ml}} & \textbf{Spectral \acrshort{ml} approx.} \\
						\midrule
						$\{\mathrm{Q}_j\}$ & $O(MK^2\nr^2)$ & $O(K\nr(\ln(K\nr) + MK))$ \\
						\midrule
						\acrshort{ml} detection & $O(M)$ & $O(M)$ \\
						\botrule
					\end{tabular}
				\end{table}
				\autoref{tab:complex} contains a summary of the computational complexity associated to each implementation.
				The advantages of using the spectral approximation are clear when employing large arrays at the \acrshort{bs}, since the number of operations to compute both data statistics and \acrshort{ml} metrics scale slower with $\nr$.
				Nevertheless, this approach is based on an asymptotic representation of the \acrshort{ml} detector as $\nr\to\infty$, and its validity for finite $\nr$ remains to be seen.
				
				The performance of the spectral \acrshort{ml} approximation described in the previous section will be assessed next against the regular \acrshort{ml} detector from \autoref{ssec:uml}.
				We consider a channel under \acrshort{sswmod}, whose \acrshort{acf} is constructed from~\eqref{eq:channel_acf}, by setting
				\begin{equation}
					[\covmat{\bsf{H}}(m)]_{k,k} \triangleq \tfrac{1}{k+2} \sinc^2\bigl(\tfrac{m}{k+2}\bigr), \quad k=0,\dots,\nt-1.
				\end{equation}
				This yields a set of spectra $\{\cyc{\bsf{H}}^{(k)}(\lambda)\}$ with triangular shape of widths $\{\frac{1}{k+2}\}$.
				This fading profile has been adopted motivated by two main reasons.
				On the one hand, it represents a simple model constructed from piece-wise linear functions of compact support.
				On the other hand, it characterizes a sufficiently well-behaved, worst-case scenario.
				Since triangular functions belong to differentiability class $C^0$, the correlation drop-off between adjacent antennas is assumed to decay quadratically, much slower than one corresponding to smooth spectra~\cite[Sec.~4.4.5]{Vetterli2014}.
				Note, however, that $\{\cyc{\bsf{H}}^{(k)}(\lambda)\in C^0\}$ avoids the occurrence of disturbing artifacts such as Gibbs phenomenon.
				
				The channel correlation matrix is then
				\begin{equation}
					\covmat{\vecrand{h}} = \eta\cdot(\id{\nr} \otimes \Ut)\begin{bmatrix}
						\covmat{\bsf{H}}(0) & \covmat{\bsf{H}}(1) & \cdots & \covmat{\bsf{H}}(\nr-1) \\
						\covmat{\bsf{H}}(1) & \covmat{\bsf{H}}(0) & \cdots & \covmat{\bsf{H}}(\nr-2) \\
						\vdots & \vdots & \ddots & \vdots \\
						\covmat{\bsf{H}}(\nr-1) & \covmat{\bsf{H}}(\nr-2) & \cdots & \covmat{\bsf{H}}(0)
					\end{bmatrix} (\id{\nr} \otimes \herm{\Ut}),
				\end{equation}
				where $\eta$ is a power normalization parameter that ensures that $\trace[\covmat{\vecrand{h}}]\triangleq\nt\nr$, as stated in the signal model from \autoref{ssec:model}.
				The transmitter basis $\Ut$ is selected randomly from the set of unitary $\nt\times\nt$ matrices~\cite{Mezzadri2007}.
				Noise power is $\frac{1}{K}\expec[\lVert\bsf{Z}\rVert_{\frob}^2]=\power{\bsf{Z}}$\nr.
				\acrshort{snr} at the receiver is defined as
				\begin{equation}\label{eq:snr}
					\snr \triangleq \frac{\expec[\lVert\bsf{XH}\rVert_{\frob}^2]}{\expec[\lVert\bsf{Z}\rVert_{\frob}^2]} = \frac{\trace\Bigl[\covmat{\vecrand{h}}\expec\bigl[\herm{\vecrand{X}\vphantom{\bsf{X}}}\vecrand{X}\bigr]\Bigr]}{\power{\bsf{Z}}K\nr}.
				\end{equation}
				
				Setting $K\geq2\nt$, the pair of tested codewords is constructed as follows.
				An element $\mathbf{U}\in\complex^{K\times K}$ is selected randomly from the set of unitary $K\times K$ matrices.
				Its first $\nt$ columns are named $\mathbf{U}_i\in\complex^{K\times\nt}$, and the following $\nt$ ones are named $\mathbf{U}_j\in\complex^{K\times\nt}$.
				Then, codewords are $\mathbf{X}_i \triangleq \sqrt{\frac{K}{\nt}}\mathbf{U}_i$ and $\mathbf{X}_j \triangleq \sqrt{\frac{K}{2\nt}} \mathbf{U}_j$, thus differing both in power ($1$ and $1/2$, respectively) and spanned subspace (they are orthogonal), as to illustrate the effects caused by both discrepancies.
				Under this transmission scheme,~\eqref{eq:snr} simplifies to $\snr = \frac{3/4}{\power{\bsf{Z}}}$.
				
				\begin{figure}[t]
					\centering
					\resizebox{\textwidth}{!}{
						\begin{tikzpicture}
							\pgfplotsset{set layers}
							\pgfplotstableread[col sep=comma, row sep=newline]{Nr-10.csv}{\tauladata}
							\pgfplotstabletranspose[colnames from=Setting]\taula{\tauladata}
							\pgfplotstableread[col sep=comma, row sep=newline]{Nr-5.csv}{\tauladata}
							\pgfplotstabletranspose[colnames from=Setting]\taulaa{\tauladata}
							\pgfplotsset{mlline/.style={thick, mark=x, dotted, mark options={solid}, mark size=4pt, forget plot}}
							\pgfplotsset{spectralline/.style={thick, solid, mark options={solid}, mark size=4pt}}
							\begin{loglogaxis}[
								width=\linewidth,	height=.5\linewidth,
								xmin=2,			ymin=5e-8,
								xmax=256,		ymax=1,
								xlabel={$\nr$},		ylabel=$\text{Pr}(\text{error})$,
								log basis x=2,		log basis y=10,
								axis line style={thick},
								legend pos = south west,
								legend style={on layer = axis foreground},
								clip marker paths=true,
								legend cell align=left,
								grid=major,
								ytick distance=10
								]
								\addplot [spectralline, RoyalBlue1, mark=+] table [y=K4spec] {\taula};
								\addlegendentry{$K=4$, $\nt=1$};
								\addplot [mlline, RoyalBlue1] table [y=K4] {\taula};
								\addplot [spectralline, SpringGreen3, mark=oplus] table [y=K6spec] {\taula};
								\addlegendentry{$K=6$, $\nt=2$};
								\addplot [mlline, SpringGreen3] table [y=K6] {\taula};
								\addplot [spectralline, Firebrick1, mark=-] table [y=K8spec] {\taula};
								\addlegendentry{$K=8$, $\nt=4$};
								\addplot [mlline, Firebrick1] table [y=K8] {\taula};
								\addplot [thick, mark=x, dotted, mark options={solid}, mark size=4pt] coordinates {(1,1)};
								\addlegendentry{Regular \acrshort{ml}};
								\node[shape=ellipse, draw, minimum width=.5cm, minimum height=1.25cm, dashed] (ell1) at (100,0.001) {};
								\node[anchor=south west] at (ell1.north) {\qty{-10}{\dB}};
								\addplot [spectralline, RoyalBlue1, mark=+] table [y=K4spec] {\taulaa};
								\addplot [mlline, RoyalBlue1] table [y=K4] {\taulaa};
								\addplot [spectralline, SpringGreen3, mark=oplus] table [y=K6spec] {\taulaa};
								\addplot [mlline, SpringGreen3] table [y=K6] {\taulaa};
								\addplot [spectralline, Firebrick1, mark=-] table [y=K8spec] {\taulaa};
								\addplot [mlline, Firebrick1] table [y=K8] {\taulaa};
								\node[shape=ellipse, draw, minimum width=.25cm, minimum height=1cm, dashed] (ell2) at (12,0.007) {};
								\node[anchor=north east] at (ell2.south) {\qty{-5}{\dB}};
							\end{loglogaxis}
						\end{tikzpicture}
					}
					\caption{Error probability in terms of $\nr$, for various $K$ and $\nt$ at $\snr=$\qty{-10}{\dB} and $\snr=$\qty{-5}{\dB}.}
					\label{fig:setting_1}
				\end{figure}
				\begin{figure}[t]
					\centering
					\resizebox{\textwidth}{!}{
					\begin{tikzpicture}
						\pgfplotsset{set layers}
						\pgfplotstableread[col sep=comma, row sep=newline]{SNR32.csv}{\tauladata}
						\pgfplotstabletranspose[colnames from=Setting]\taula{\tauladata}
						\pgfplotstableread[col sep=comma, row sep=newline]{SNR128.csv}{\tauladata}
						\pgfplotstabletranspose[colnames from=Setting]\taulaa{\tauladata}
						\pgfplotsset{mlline/.style={thick, mark=x, dotted, mark options={solid}, mark size=4pt, forget plot}}
						\pgfplotsset{spectralline/.style={thick, solid, mark options={solid}, mark size=4pt}}
						\begin{semilogyaxis}[
							width=\linewidth,	height=.5\linewidth,
							xmin=-15,			ymin=5e-8,
							xmax=-2,			ymax=1,
							xlabel={\acrshort{snr} [\unit{\dB}]},		ylabel=$\text{Pr}(\text{error})$,
							log basis y=10,
							axis line style={thick},
							legend pos = south west,
							legend style={on layer = axis foreground},
							clip marker paths=true,
							legend cell align=left,
							grid=major,
							ytick distance=10
							]
							\addplot [spectralline, RoyalBlue1, mark=+] table [y=K4spec] {\taula};
							\addlegendentry{$K=4$, $\nt=1$};
							\addplot [mlline, RoyalBlue1] table [y=K4] {\taula};
							\addplot [spectralline, SpringGreen3, mark=oplus] table [y=K6spec] {\taula};
							\addlegendentry{$K=6$, $\nt=2$};
							\addplot [mlline, SpringGreen3] table [y=K6] {\taula};
							\addplot [spectralline, Firebrick1, mark=-] table [y=K8spec] {\taula};
							\addlegendentry{$K=8$, $\nt=4$};
							\addplot [mlline, Firebrick1] table [y=K8] {\taula};
							\addplot [thick, mark=x, dotted, mark options={solid}, mark size=4pt] coordinates {(-16,1)};
							\addlegendentry{Regular \acrshort{ml}};
							\node[shape=ellipse, draw, minimum width=.5cm, minimum height=1.75cm, dashed] (ell1) at (-5,0.00005) {};
							\node[anchor=south west] at (ell1.north) {32};
							\addplot [spectralline, RoyalBlue1, mark=+] table [y=K4spec] {\taulaa};
							\addplot [mlline, RoyalBlue1] table [y=K4] {\taulaa};
							\addplot [spectralline, SpringGreen3, mark=oplus] table [y=K6spec] {\taulaa};
							\addplot [mlline, SpringGreen3] table [y=K6] {\taulaa};
							\addplot [spectralline, Firebrick1, mark=-] table [y=K8spec] {\taulaa};
							\addplot [mlline, Firebrick1] table [y=K8] {\taulaa};
							\node[shape=ellipse, draw, minimum width=.5cm, minimum height=1.75cm, dashed] (ell2) at (-9,0.00002) {};
							\node[anchor=north east] at (ell2.south) {128};
						\end{semilogyaxis}
					\end{tikzpicture}
					}
					\caption{Error probability in terms of \acrshort{snr}, for various $K$ and $\nt$, with $\nr=32$ and $\nr=128$.}
					\label{fig:setting_2}
				\end{figure}
				
				\autoref{fig:setting_1} and~\autoref{fig:setting_2} assess the performance of detectors from \autoref{ssec:uml} (Regular \acrshort{ml}) and \autoref{ssec:low_complex} (Spectral \acrshort{ml}) for different configurations of $K$ and $\nt$, at various regimes of $\nr$ and \acrshort{snr}.
				The metric of choice is the average error probability in the binary detection problem involving $\mathbf{X}_i$ and $\mathbf{X}_j$, computed as
				\begin{equation}
					\text{Pr}(\text{error}) \triangleq \frac{\text{Pr}\bigl(\text{detect }\mathbf{X}_j\vert\text{transmit }\mathbf{X}_i\bigr)+\text{Pr}\bigl(\text{detect }\mathbf{X}_i\vert\text{transmit }\mathbf{X}_j\bigr)}{2}.
				\end{equation}
				It has been averaged over \num{2e7} Monte Carlo simulations.
				
				On the one hand, \autoref{fig:setting_1} contains the detection error probability for increasing values of $\nr$, at $\snr=$\qty{-10}{\dB} and $\snr=$\qty{-5}{\dB}.
				All six curves display similar decreasing trends, with overall better results for higher $K$ and $\nt$.
				Remarkably, the difference between regular and spectral \acrshort{ml} detectors is diminutive, even for moderate receiver arrays.
				It is most noticeable for $\nr<2^5$ in the higher \acrshort{snr} regime.
				On the other hand, \autoref{fig:setting_2} shows the same metric in terms of \acrshort{snr}, for $\nr=32$ and $\nr=128$.
				The performance loss incurred from using the spectral approximation is only noticeable for the smaller array at lower error probability values (higher \acrshort{snr}).
				Based on these results, we can state the low-complexity detector developed through \autoref{ssec:low_complex} can be used instead of~\eqref{eq:ml} in a wide range of configurations (particularly with large arrays) with negligible performance degradation.
				
	\section{Codeword structure analysis} \label{sec:structure}
		
		In this section we will get some final insights on the problem considered in this work, as well as connect it to various ideas from the literature on detection theory.
		The figure of merit will be the asymptotic normalized \acrshort{kld}, whose relevance in evaluating hypothesis testing has been discussed in \autoref{sssec:ml_metrics}.
		More precisely, it will be linked to various results on singular (\ie error-free) detection: understanding if two codewords can be detected unmistakably allows to characterize the existence of error floors under various asymptotic regimes.
		This is of particular interest when assessing the gains achievable by pouring more resources into a system.
		
		We study the transmission of structured codewords within the presented spectral domain.
		Each transmitted codeword, taken from $M$-ary alphabet $\mathcal{X}$, can be constructed from its singular value decomposition as $\mathbf{X}_{i}\triangleq\mathbf{\Phi}_{i}\mathbf{W}_{i}^{\frac{1}{2}}\herm{\mathbf{\Psi}_{i}}$, where:
		\begin{itemize}
			\item $\mathbf{\Phi}_{i}\in\complex^{K\times K}$ is a unitary matrix.
			\item $\mathbf{W}_{i}^{\frac{1}{2}}\in\reals^{K\times\nt}$ is a power-loading diagonal matrix with nonnegative real entries $\{[\mathbf{W}_{i}^{\frac{1}{2}}]_{p,p}\triangleq (w_{i}^{(p)})^{\frac{1}{2}}\}_{p=0,\dots,\nt-1}$.
			The first $N_i\leq\nt$ ones are non-null (\textit{active}), while the rest are 0 (\textit{inactive}).
			\item $\mathbf{\Psi}_{i}\in\complex^{\nt\times\nt}$ is a unitary precoding matrix.
		\end{itemize}
		The rank of $\mathbf{X}_{i}$ is $N_i$, as stated by the definition of $\mathbf{W}_{i}^{\frac{1}{2}}$.
		Therefore, $\mathbf{\Phi}_{i}\mathbf{W}_{i}^{\frac{1}{2}}$ spans a $N_i$-dimensional subspace $\mathrm{V}_i$.
		This formulation allows us to straightforwardly get the \acrshort{csm} basis $\mathbf{B}_i(\sigma)$  in some relevant scenarios that will be explored next.
			
		\subsection{Statistical \texorpdfstring{\acrshort{csi}}{} at the transmitter} \label{ssec:csit}

			When the transmitter is aware of $\Ut$ in~\eqref{eq:weichselberger} (\ie the channel geometry), it can implement a precoding suitable to the fading correlation geometry: $\mathbf{\Psi}_{i} \triangleq \Ut$, which is to be understood as transmitting along the channel basis.
			This reduces the difficulty of eigendecomposing $\cycmat{i}(\sigma)$ immensely.
			We can clearly see from~\eqref{eq:csm} that its eigenbasis and \acrshort{kl} spectrum are
			\begin{equation}
				\mathbf{B}_i(\sigma) = \herm{\mathbf{F}}_{K} \herm{\mathbf{\Theta}}(\sigma) \mathbf{\Phi}_{i}, \quad \klcycmat{i}(\sigma) = \mathbf{W}_{i}^{\frac{1}{2}}\cycmat{\bsf{H}}(K\sigma)\herm{\bigl(\mathbf{W}_{i}^{\frac{1}{2}}\bigr)}+\power{\bsf{Z}}\id{K}.
			\end{equation}
			The normalized \acrshort{kld} resulting from using this configuration onto~\eqref{eq:kld_spectral} is
			\begin{equation}
				\overset{\infty}{\mathcal{D}}_{\mathrm{KL}}(i\Vert j) = \int_{0}^{\mathrlap{\frac{1}{K}}} \der\sigma \biggl(\trace\bigl[\mathbf{\Phi}_{j}\klcycmat{j}^{-1}(\sigma)\herm{\mathbf{\Phi}_{j}}\mathbf{\Phi}_{i}\klcycmat{i}(\sigma)\herm{\mathbf{\Phi}_{i}} - \id{K}\bigr] - \ln\frac{\lvert\klcycmat{i}(\sigma)\rvert}{\lvert\klcycmat{j}(\sigma)\rvert}\biggr).
			\end{equation}

			In the case in which $\mathbf{\Phi}_{i}=\mathbf{\Phi}_{j}$, \ie both codewords use the same signal basis, information is purely modulated on the transmitted energy through $\mathbf{W}_i$ and $\mathbf{W}_j$.
 			The asymptotic normalized \acrshort{kld} simplifies into
 			\begin{equation}
 				\overset{\infty}{\mathcal{D}}_{\mathrm{KL}}(i\Vert j) = \int_{0}^{\mathrlap{\frac{1}{K}}} \der\sigma \biggl(\trace\bigl[\klcycmat{j}^{-1}(\sigma)\klcycmat{i}(\sigma) - \id{K}\bigr] - \ln\frac{\lvert\klcycmat{i}(\sigma)\rvert}{\lvert\klcycmat{j}(\sigma)\rvert}\biggr) \triangleq \smashoperator{\sum_{k=0}^{\nt-1}} \mathcal{D}_{\mathrm{IS}}^{(k)}(\klcyc{i}\Vert\klcyc{j}), \label{eq:is_div}
 			\end{equation}
 			where
 			\begin{equation}
 				\mathcal{D}_{\mathrm{IS}}^{(k)}(\klcyc{i}\Vert\klcyc{j}) \triangleq \int_{0}^{\mathrlap{\frac{1}{K}}} \der\sigma \Biggl(\frac{w_{i}^{(k)}\cyc{\bsf{H}}^{(k)}(K\sigma)+\power{\bsf{Z}}}{w_{j}^{(k)}\cyc{\bsf{H}}^{(k)}(K\sigma)+\power{\bsf{Z}}} - 1 - \ln\frac{w_{i}^{(k)}\cyc{\bsf{H}}^{(k)}(K\sigma)+\power{\bsf{Z}}}{w_{j}^{(k)}\cyc{\bsf{H}}^{(k)}(K\sigma)+\power{\bsf{Z}}}\Biggr). \label{eq:is_div_k}
 			\end{equation}
			This is the \textit{\acrfull{isd}}~\cite[Sec.~10.4.2]{Levy2008} between $\klcyc{i}^{(k)}(\sigma)$ and $\klcyc{j}^{(k)}(\sigma)$ in the $k$th \acrshort{kl} sub-band.
			This measure of dissimilarity is nonnegative and only null when $\klcyc{i}^{(k)}(\sigma)=\klcyc{j}^{(k)}(\sigma)$ (except in a support of measure 0).
			Moreover,~\eqref{eq:is_div} is the \acrshort{isd} between the full \acrshort{kl} spectra corresponding to $\{\mathsf{y}_i(n)\}$ and $\{\mathsf{y}_j(n)\}$:
			\begin{equation}
				\mathcal{D}_{\mathrm{IS}}(\klcyc{i}\Vert\klcyc{j}) \triangleq \int_{0}^{\mathrlap{1}}\der\lambda \biggl(\frac{\klcyc{i}(\lambda)}{\klcyc{j}(\lambda)} - 1 - \ln\frac{\klcyc{i}(\lambda)}{\klcyc{j}(\lambda)}\biggr).
			\end{equation}
			
			It is of interest to study the tendency of~\eqref{eq:is_div_k} in the high \acrshort{snr} regime, \ie as $\power{\bsf{Z}}\to0$ (see~\eqref{eq:snr}), since it reveals the properties of $\mathbf{X}_i$ and $\mathbf{X}_j$ that are relevant for their detection asymptotically.
			Given $N_j>N_i$, the corresponding \acrshort{isd} in the $k$th \acrshort{kl} sub-band for $N_i\leq k<N_j$ is
			\begin{equation}\label{eq:high_snr_energy_1}
				\lim_{\power{\bsf{Z}}\to0} \mathcal{D}_{\mathrm{IS}}^{(k)}(\klcyc{i}\Vert\klcyc{j}) \geq -\tfrac{1}{K} - \int_{0}^{\mathrlap{\frac{1}{K}}} \der\sigma \biggl( \lim_{\power{\bsf{Z}}\to0}\ln\frac{\power{\bsf{Z}}}{w_{j}^{(k)}\cyc{\bsf{H}}^{(k)}(K\sigma)+\power{\bsf{Z}}}\biggr) = +\infty,
			\end{equation}
			where we have applied Fatou's Lemma~\cite[Lemma~B.2.4]{Choudary2014} in order to swap the order of limit and integration.
			The same divergence is observed when $N_i>N_j$ for $N_j\leq k<N_i$.
			On the contrary, when $N_i=N_j$,
			\begin{equation}\label{eq:high_snr_energy}
				\lim_{\power{\bsf{Z}}\to0} \overset{\infty}{\mathcal{D}}_{\mathrm{KL}}(i\Vert j) = \tfrac{1}{K} \sum_{k=0}^{N_i-1} \frac{w_{i}^{(k)}}{w_{j}^{(k)}} - 1 - \ln\frac{w_{i}^{(k)}}{w_{j}^{(k)}} 
			\end{equation}
			by \acrfull{ldct}~\cite[Th.~11.3.13]{Choudary2014}.
			These results are consistent with~\cite{VilaInsa2025}, which states that singular detection is achievable in the high \acrshort{snr} regime if and only if $\mathbf{X}_i$ and $\mathbf{X}_j$ span different column spaces.
			Within our framework, this property translates into~\eqref{eq:high_snr_energy_1} diverging, since $\mathbf{X}_i$ and $\mathbf{X}_j$ belong to different subspaces, whereas~\eqref{eq:high_snr_energy} is bounded due to both codewords spanning the same column space.
			
			A similar analysis can be performed for the low \acrshort{snr} regime, \ie $\power{\bsf{Z}}\to\infty$ (see~\eqref{eq:snr}).
			Using \acrshort{ldct} once again, we obtain the following second-order Taylor approximation:
			\begin{equation}
				\lim_{\power{\bsf{Z}}\to\infty} \overset{\infty}{\mathcal{D}}_{\mathrm{KL}}(i\Vert j) = \lim_{\power{\bsf{Z}}\to\infty} \tfrac{1}{\power{\bsf{Z}}^2} \smashoperator{\sum_{k=0}^{\nt-1}} \bigl(w_{i}^{(k)} - w_{j}^{(k)}\bigr)^2 \tfrac{1}{2}\int_{0}^{\mathrlap{\frac{1}{K}}} (\cyc{\bsf{H}}^{(k)}(K\sigma))^2
				\der\sigma.\label{eq:low_snr_energy}
			\end{equation}
			By~\cite[Prop.~6.A.3.]{Marshall2011}, sorting $\{w_{i}^{(k)}\}$ in descending order and $\{w_{j}^{(k)}\}$ in ascending order (or vice-versa) yields the maximum value of~\eqref{eq:low_snr_energy} asymptotically.
			Remarkably, this strategy is also optimal in~\eqref{eq:high_snr_energy} for the high \acrshort{snr} regime: since it is a Schur-convex function of $\{w_{i}^{(k)}/w_{j}^{(k)}\}$~\cite[Prop.~3.C.1.]{Marshall2011}, it is maximized when $\{w_{i}^{(k)}\}$ and $\{w_{j}^{(k)}\}$ are arranged in opposite orders~\cite[Prop.~6.A.1.]{Marshall2011}.
			This idea suggests a design approach on the power profiles of $\mathcal{X}$ based on the joint majorization properties of $\bigl\{\{w_i^{(p)}\},\{w_j^{(p')}\}\bigr\}$, for every pair $\mathbf{X}_i,\mathbf{X}_j\in\mathcal{X}$.
				
		\subsection{No \texorpdfstring{\acrshort{csi}}{} at the transmitter} \label{ssec:no_csit}
		
			We now consider the case in which the \acrshort{ue} is not aware of the channel basis nor spectrum.
			When the transmitter is not aware of the channel statistics, it is reasonable to set an agnostic precoding ($\mathbf{\Psi}_{i}\triangleq\id{\nt}$) and no energy modulation in the power-loading matrix ($\{(w_{i}^{(p)})^{\frac{1}{2}}=1\}_{p=0,\dots,N_i-1}$), so that no radiated direction is favored.
			For a clearer exposition, we set $N_i\triangleq\nt$ for all $\mathbf{X}_i\in\mathcal{X}$.
			This yields $\mathcal{X}$ to be a set of semi-unitary matrices:
			\begin{equation}
				\mathbf{X}_{i}\herm{\mathbf{X}_{i}}\triangleq\mathbf{P}_{i}, \quad \herm{\mathbf{X}_{i}}\mathbf{X}_{i}=\id{\nt},
			\end{equation}
			where $\mathbf{P}_{i}\in\complex^{K\times K}$ is the projection matrix onto $\mathrm{V}_i$.
			If every codeword in $\mathcal{X}$ spans a distinct subspace, this alphabet is known as a \textit{Grassmannian constellation}, since each $\mathrm{V}_i$ corresponds to a different point on the Grassmann manifold $\mathcal{G}(\nt,\complex^K)$~\cite{Zheng2002}.
			
			Under these premises, the eigendecomposition of the \acrshort{csm} of $\{\mathsf{y}_{i}(n)\}$ is
			\begin{equation}
				\mathbf{B}_i(\sigma)= \herm{\mathbf{F}_{K}} \herm{\mathbf{\Theta}}(\sigma)[\mathbf{U}_i,\mathbf{U}_{\overline{i}}], \quad \klcycmat{i}(\sigma) = \Diag\bigl(\cycmat{\bsf{H}}(K\sigma), \mathbf{0}_{K-\nt}\bigr) + \power{\bsf{Z}}\id{K}, \label{eq:csm_no_csit}
			\end{equation}
			where $\mathbf{U}_i\triangleq\mathbf{\Phi}_{i}\mathbf{W}_{i}\Ut$ and $\mathbf{U}_{\overline{i}}\in\complex^{K\times(K-\nt)}$ spans the orthogonal complement to $\mathrm{V}_i$ (its corresponding projector is $\mathbf{P}_{\overline{i}}\triangleq\mathbf{U}_{\overline{i}}\herm{\mathbf{U}}_{\overline{i}}$).
			It is clear that $\klcycmat{i}(\sigma)$ is common for all $\mathbf{X}_{i}$, so the logarithmic term in~\eqref{eq:kld_spectral} is null.
			By applying~\eqref{eq:csm_no_csit} onto~\eqref{eq:kld_spectral}, the resulting \acrshort{kld} is
			\begin{equation}
				\overset{\infty}{\mathcal{D}}_{\mathrm{KL}}(i\Vert j) = \int_{0}^{\mathrlap{\frac{1}{K}}} \der\sigma \trace\bigl[ \klcycmat{i}^{-1}(\sigma) \herm{[\mathbf{U}_j,\mathbf{U}_{\overline{j}}]} [\mathbf{U}_i,\mathbf{U}_{\overline{i}}] \klcycmat{i}(\sigma) \herm{[\mathbf{U}_i,\mathbf{U}_{\overline{i}}]} [\mathbf{U}_j,\mathbf{U}_{\overline{j}}] - \id{K}\bigr] .
			\end{equation}
			With simple algebraic manipulations, this is developed as
			\begin{subequations}
				\begin{align}
					&\overset{\infty}{\mathcal{D}}_{\mathrm{KL}}(i\Vert j) = \tfrac{1}{K} \int_{0}^{\mathrlap{1}} \trace[\mathbf{\Sigma}^{-1}(\lambda)\herm{\mathbf{U}_j}\mathbf{U}_i\mathbf{\Sigma}(\lambda)\herm{\mathbf{U}_i}\mathbf{U}_j]\der\lambda + \tfrac{1}{K}\lVert\herm{\mathbf{U}_{\overline{j}}}\mathbf{U}_{\overline{i}}\rVert_{\frob}^2 - 1 \label{eq:first}\\
					&\quad + \tfrac{1}{K}\int_{0}^{\mathrlap{1}}\der\lambda\bigl(\power{\bsf{Z}}\trace[\mathbf{U}_j\mathbf{\Sigma}^{-1}(\lambda)\herm{\mathbf{U}_j}\mathbf{P}_{\overline{i}}] + \tfrac{1}{\power{\bsf{Z}}}\trace[\herm{\mathbf{U}_{\overline{j}}}\mathbf{U}_i\mathbf{\Sigma}(\lambda)\herm{\mathbf{U}_i}\mathbf{U}_{\overline{j}}]\bigr) \label{eq:second}
				\end{align}
			\end{subequations}
			where $\mathbf{\Sigma}(\lambda)\triangleq\cycmat{\bsf{H}}(\lambda)+\power{\bsf{Z}}\id{\nt}$.
			The first term in~\eqref{eq:first} compares the signal subspaces between hypotheses $i$ and $j$, whereas the second term compares the noise subspaces.
			The two terms in~\eqref{eq:second} compare the crossed subspaces.
			
			This new expression allows to analyze the behavior of the \acrshort{kld} in the high and low \acrshort{snr} regimes as in \autoref{ssec:csit}.
			When $\power{\bsf{Z}}\to0$, the previous \acrshort{kld} becomes
			\begin{equation}
				\begin{multlined}
					\lim_{\power{\bsf{Z}}\to0}\overset{\infty}{\mathcal{D}}_{\mathrm{KL}}(i\Vert j) = \tfrac{1}{K} \int_{0}^{\mathrlap{1}} \trace[\cycmat{\bsf{H}}^{-1}(\lambda)\herm{\mathbf{U}_j}\mathbf{U}_i\cycmat{\bsf{H}}(\lambda)\herm{\mathbf{U}_i}\mathbf{U}_j]\der\lambda + \tfrac{1}{K}\lVert\herm{\mathbf{U}_{\overline{j}}}\mathbf{U}_{\overline{i}}\rVert_{\frob}^2 - 1\\
					+ \lim_{\power{\bsf{Z}}\to0} \tfrac{1}{K\power{\bsf{Z}}}\trace\bigl[\herm{\mathbf{U}_{\overline{j}}}\mathbf{U}_i\bigl({\textstyle\int_{0}^{\mathrlap{1}}\cycmat{\bsf{H}}(\lambda)\der\lambda}\bigr)\herm{\mathbf{U}_i}\mathbf{U}_{\overline{j}}\bigr].
				\end{multlined}
			\end{equation}
			Therefore, it will diverge as long as $\mathrm{V}_i\neq\mathrm{V}_j$ which is consistent with singular detection conditions in the high \acrshort{snr} regime~\cite{VilaInsa2025}.
			On the contrary, when $\power{\bsf{Z}}\to\infty$, the second-order Taylor approximation of the \acrshort{kld} is a function of correlations of spectra, similarly to what was observed in \autoref{ssec:csit} (see~\eqref{eq:low_snr_energy}).
			This implies high spectral correlation generally benefits detection when noise obscures the signal of interest, which is a fairly common behavior displayed at low \acrshort{snr} by classical techniques like the \textit{estimator-correlator}~\cite{Zhang2010}.
			
	\section{Concluding remarks}
	
		This work has presented a novel approach to analyze a noncoherent uplink massive \acrshort{mimo} system, by exploiting the cyclostationarity emerging in the joint treatment of the time and space domains.
		The use of large arrays has allowed the derivation of asymptotic representations of the received signals, from which relevant attributes for codeword detection become manifest.
		As a result of this theoretical study, a low-complexity implementation of the \acrshort{ml} detector has been inspired, which benefits from the low dimensional spectral correlation structures.
		
		A straightforward extension of the presented work may involve examining the downlink scenario using similar techniques as the ones described herein.
		Although the signal received by the \acrshort{ue} does not display a \acrshort{cs} nature, a representation in terms of the channel spectrum would be accessible (due to the massive array at the \acrshort{bs}) and provide insights on the problem, as well as potentially lead to simplified precoding strategies based on the \acrshort{fft} in the space domain.
		Another clear line of research may employ the asymptotic framework developed in this paper to design alphabets tailored to specific fading profiles.
		The limit expressions obtained in the high and low \acrshort{snr} regimes provide particularly useful design criteria.
		Additionally, the study of the spectral interplay between users in a multiple access setting may provide deep and novel perspectives on challenging problems.
				
	\backmatter
	
	\section*{Declarations}
	
		\bmhead{Availability of data and materials}
		
			Data sharing is not applicable to this article as no datasets were generated or analyzed during the current study.
			
		\bmhead{Competing interests}
		
			The authors declare that they have no competing interests.
	
		\bmhead{Funding}
		
			This work was (partially) funded by project MAYTE (PID2022-136512OB-C21) by MICIU/AEI/10.13039/501100011033 and ERDF/EU, grant 2021 SGR 01033 and grant 2022 FI SDUR 00164 by Departament de Recerca i Universitats de la Generalitat de Catalunya.
			
		\bmhead{Authors' contributions}
		
			MV-I: \textit{Original idea, research development, implementation of numerical simulations, preparation of the manuscript and revision.}
			JR: \textit{Original idea, research development, assessment of numerical simulations, preparation of the manuscript and revision.}
			
		\bmhead{Acknowledgments}
		
			Not applicable.
	
	\begin{appendices}
		
		\renewcommand{\theHequation}{AABB\arabic{equation}} 
		
		\section{Proof of \autoref{prop:channel}}\label{app:proof_channel_acf}
		
			From~\eqref{eq:channel_acf_entry}, recall the expression for the $k$th diagonal entry of $\covmat{\bsf{H}}(m)$, which is a Darboux sum.
			It converges to an integral by defining $\cyc{\bsf{H}}^{(k)}(\frac{l}{\nr})\triangleq\gamma_{k,l}$ and setting $\frac{l}{\nr}\to\lambda\in[0,1)$ and $\frac{1}{\nr}\to\der\lambda$:
			\begin{equation}
				\lim_{\nr\to\infty} [\covmat{\bsf{H}}(m)]_{k,k} = \int_{0}^{\mathrlap{1}}\cyc{\bsf{H}}^{(k)}(\lambda)\euler^{\imunit2\pi m\lambda}\der\lambda. \label{eq:wiener}
			\end{equation}
			Since $\{\mathring{\mathsf{h}}_k(n)\}$ is a \acrshort{wss} process, by the Wiener-Khinchin theorem~\cite[Sec.~9.2.2]{Schreier2010}, we know that $\cyc{\bsf{H}}^{(k)}(\lambda)$ is in fact its \acrshort{psd}.
			By defining $\cycmat{\bsf{H}}(\lambda)\triangleq\Diag(\{\cyc{\bsf{H}}^{(k)}(\lambda)\}_{k\in\{0,\dots,\nt-1\}})$ as in~\eqref{eq:channel_psd}, the resulting \acrshort{acf} from~\eqref{eq:channel_acf} can be expressed in the limit as
			\begin{equation}
				\lim_{\nr\to\infty} \cov{\bsf{H}}(n,m) = \smashoperator[l]{\sum_{p,q\in\integs}} \int_{0}^{\mathrlap{1}}[\Ut\cycmat{\bsf{H}}(\lambda)\herm{\Ut}]_{n+m-p\nt,n-q\nt}\euler^{\imunit2\pi(p-q)\lambda}\der\lambda.
			\end{equation}
		
		\section{Proof of \autoref{prop:cs}} \label{app:proof_cs}
		
			From~\eqref{eq:y_acf}, the \acrshort{acf} of $\{\mathsf{y}_{i}(n)\}$ is
			\begin{equation}
				\cov{i}(n,m) = \smashoperator[l]{\sum_{l,l\Prime=-\frac{\nr}{2}}^{\frac{\nr}{2}-1}}\smashoperator[r]{\sum_{k,k\Prime=0\vphantom{\frac{\nr}{2}}}^{\nt-1\vphantom{\frac{\nr}{2}}}}[\mathbf{X}_{i}]_{n+m-lK,k}[\mathbf{X}_{i}]_{n-l\Prime K,k\Prime}^*\expec\bigl[[\bsf{H}]_{k,l}[\bsf{H}]_{k\Prime,l\Prime}^*\bigr] +\power{\bsf{Z}}\delta_m.
			\end{equation}
			We have used the fact that $\bsf{H}$ and $\bsf{Z}$ are independent and the \acrshort{acf} of $\{[\vecrand{z}]_n\}$ from its covariance matrix in \autoref{ssec:model}.
			For $r\in\integs$,
			\begin{equation}
				\cov{i}(n+rK,m) = \smashoperator{\sum_{l,l\Prime,k,k\Prime}}[\mathbf{X}_{i}]_{n+m-(l-r)K,k}[\mathbf{X}_{i}]_{n-(l\Prime-r)K,k\Prime}^*\expec\bigl[[\bsf{H}]_{k,l}[\bsf{H}]_{k\Prime,l\Prime}^*\bigr] + \power{\bsf{Z}}\delta_m,
			\end{equation}
			Then, by setting $s=l-r$ and $s\Prime=l\Prime-r$, such that $s,s\Prime\in\{-\frac{\nr}{2}-r,\dots,\frac{\nr}{2}-1-r\}$,
			\begin{equation}
				\cov{i}(n+rK,m) = \smashoperator[l]{\sum_{s,s\Prime}}\smashoperator[r]{\sum_{k,k\Prime}}[\mathbf{X}_{i}]_{n+m-sK,k}[\mathbf{X}_{i}]_{n-s\Prime K,k\Prime}^*\expec\bigl[[\bsf{H}]_{k,s+r}[\bsf{H}]_{k\Prime,s\Prime+r}^*\bigr] + \power{\bsf{Z}}\delta_m.
			\end{equation}
			Finally, by the spatial stationarity of $\bsf{H}$ (refer to~\eqref{eq:joint_wss}), we can see that
			\begin{align}
				\cov{i}(n+rK,m) &= \smashoperator[l]{\sum_{s,s\Prime=-\frac{\nr}{2}}^{\frac{\nr}{2}-1}}\smashoperator{\sum_{k,k\Prime\vphantom{\frac{\nr}{2}}}}[\mathbf{X}_{i}]_{n+m-sK,k}[\mathbf{X}_{i}]_{n-s\Prime K,k\Prime}^*\expec\bigl[[\bsf{H}]_{k,s}[\bsf{H}]_{k\Prime,s\Prime}^*\bigr] + \power{\bsf{Z}}\delta_m \nonumber\\
				&\equiv \cov{i}(n,m). \label{eq:Ci}
			\end{align}
		
		\section{Proof of \autoref{prop:quadratic}} \label{app:proof_quadratic}
		
			Let $\Omega_{j}(n,m) \triangleq \lim_{\nr\to\infty} [\covmat{j}^{-1}]_{n+m,n}$ be the \textit{precision function} of $\{\mathsf{y}_{j}(n)\}$.
			By definition, we know that
			$[\covmat{j}\covmat{j}^{-1}]_{r,c} = [\covmat{j}^{-1}\covmat{j}]_{r,c} \triangleq \delta_{r-c}$.
			Therefore, in the limit of $\nr\to\infty$,
			\begin{equation}
				\lim_{\nr\to\infty} [\covmat{j}\covmat{j}^{-1}]_{r,c} = \smashoperator{\sum_{k\in\integs}} \infcov{j}(k,r-k)\Omega_{j}(c,k-c) \triangleq \delta_{r-c} \label{eq:cov_prec}
			\end{equation}
			must hold.
			From~\eqref{eq:ortho} and~\eqref{eq:eigen} we know that
			\begin{equation}
				\infcov{j}(c,r-c) = \int_{0}^{\mathrlap{1}} \phi_j(r,\lambda) \klcyc{j}(\lambda) \phi_{j}^*(c,\lambda) \der\lambda.
			\end{equation}
			It can then be verified that
			\begin{equation}
				\Omega_{j}(c,r-c) = \int_{0}^{\mathrlap{1}} \phi_{j}(r,\lambda) \klcyc{j}^{-1}(\lambda) \phi_{j}^*(c,\lambda) \der\lambda \label{eq:precision}
			\end{equation}
			does indeed fulfill~\eqref{eq:cov_prec}.
			
			We may express the quadratic form~\eqref{eq:quadratic} in the limit as
			\begin{equation}
				\overset{\infty}{\mathrm{Q}}_j(\{\mathsf{y}_{i}(n)\})  = \smashoperator{\sum_{r,c\in\integs}} \mathsf{y}_{i}^*(r) \Omega_{j}(c,r-c) \mathsf{y}_{i}(c)\der\sigma,
			\end{equation}
			where $\frac{1}{K\nr}\to\der\sigma$.
			Using~\eqref{eq:precision} and the \acrshort{kl} expansion of $\{\mathsf{y}_i(n)\}$~\eqref{eq:kl_exp}, the previous expression can be represented in the \acrshort{kl} spectral domain:
			\begin{equation}
				\overset{\infty}{\mathrm{Q}}_j(\{\mathsf{y}_{i}(n)\})
				= \der\sigma\iiint_{0}^{\mathrlap{1}} \der\mathring{\mathsf{y}}_i^*(\alpha) \frac{\langle\phi_{i},\phi_{j}\rangle(\alpha,\lambda) \langle\phi_{j},\phi_{i}\rangle(\lambda,\beta)}{\klcyc{j}(\lambda)} \der\mathring{\mathsf{y}}_i(\beta) \der\lambda,
			\end{equation}
			where
			\begin{equation}
				\langle\phi_a,\phi_b\rangle(\alpha,\beta)\triangleq\smashoperator{\sum_{n\in\integs}}\phi_a^*(n,\alpha)\phi_b(n,\beta)
			\end{equation} 
			is the inner product between two basis components.
			Since $\{\mathsf{y}_i(n)\}$ and $\{\mathsf{y}_j(n)\}$ are both \acrshort{cs} processes, it can particularized as
			\begin{equation}
				\langle\phi_a^{(k)},\phi_b^{(k\Prime)}\rangle(\sigma,\sigma\Prime) \triangleq \langle\phi_a,\phi_b\rangle\bigl(\sigma+\tfrac{k}{K},\sigma\Prime+\tfrac{k\Prime}{K}\bigr) = \delta(\sigma-\sigma\Prime) [\herm{\mathbf{B}}_a(\sigma)\mathbf{B}_b(\sigma\Prime)]_{k,k\Prime}.
			\end{equation}
			This yields
			\begin{equation}
				\overset{\infty}{\mathrm{Q}}_j(\{\mathsf{y}_{i}(n)\}) = \smashoperator[l]{\sum_{k,m}} \int_{0}^{\mathrlap{\frac{1}{K}}} [\herm{\mathbf{B}}_i(\sigma)\cycmat{j}^{-1}(\sigma)\mathbf{B}_i(\sigma)]_{k,m} \der\mathring{\mathsf{y}}_i^*\bigl(\sigma+\tfrac{k}{K}\bigr) \der\mathring{\mathsf{y}}_i\bigl(\sigma+\tfrac{m}{K}\bigr),
			\end{equation}
			where we have used~\eqref{eq:csm_eig} and the sifting property of the Dirac delta distribution.
			
		\section{Proof of \autoref{prop:csm}} \label{app:csm}
		
			Recall the cyclic spectrum from~\eqref{eq:csf}.
			We compute it for $\{\mathsf{y}_j(n)\}$ by using~\eqref{eq:C_inf}:
			\begin{multline}
				\cyc{j}^{(\frac{k}{K})}(\sigma) = \frac{1}{K} \smashoperator{\sum_{n\Prime=0\vphantom{\integs}}^{K-1}} \smashoperator[r]{\sum_{m\Prime\in\integs}} \power{\bsf{Z}}\delta_{m\Prime} \euler^{-\imunit2\pi(n\Prime\frac{k}{K}+m\Prime\sigma)} + \frac{1}{K} \smashoperator{\sum_{n=0}^{K-1}} \euler^{-\imunit2\pi n\frac{k}{K}}  \smashoperator{\sum_{m\in\integs}} \euler^{-\imunit2\pi m\sigma} \\
				\times  \smashoperator[l]{\sum_{l\in\integs}} \int_{0}^{\mathrlap{1}} [\mathbf{X}_{j}\Ut\cycmat{\bsf{H}}(\lambda)\herm{\Ut}\herm{\mathbf{X}_{j}}]_{n+m-lK,n}\euler^{\imunit2\pi l\lambda}\der\lambda.
			\end{multline}
			Notice we have set $l\Prime=0$ since $n\in\{0,\dots,K-1\}$.
			With $p=n+m-lK$, for $p\in\{0,\dots,K-1\}$, we have
			\begin{multline}
				\cyc{j}^{(\frac{k}{K})}(\sigma) = \tfrac{\power{\bsf{Z}}}{K} \smashoperator{\sum_{n\Prime=0}^{K-1}} \euler^{-\imunit2\pi n\Prime\frac{k}{K}} +\\
				\tfrac{1}{K}\smashoperator[l]{\sum_{n,p=0}^{K-1}} \int_{0}^{\mathrlap{1}} \euler^{-\imunit2\pi p\sigma} [\mathbf{X}_{j}\Ut\cycmat{\bsf{H}}(\lambda)\herm{\Ut}\herm{\mathbf{X}_{j}}]_{p,n} \euler^{\imunit2\pi n(\sigma-\frac{k}{K})} \sum_{l\in\integs} \euler^{\imunit2\pi l(\lambda-K\sigma)}\der\lambda.
			\end{multline}
			We may represent the previous expression more compactly by defining the steering vector $\boldsymbol{\theta}(\sigma)\triangleq\trans{[1,\euler^{\imunit2\pi\sigma},\dots,\euler^{\imunit2\pi\sigma(K-1)}]}$, which yields
			\begin{equation}
				\cyc{j}^{(\frac{k}{K})}(\sigma) = \tfrac{1}{K} \int_{0}^{\mathrlap{1}} \herm{\boldsymbol{\theta}}(\sigma) \mathbf{X}_{j}\Ut\cycmat{\bsf{H}}(\lambda)\herm{\Ut}\herm{\mathbf{X}_{j}} \boldsymbol{\theta}\bigl(\sigma-\tfrac{k}{K}\bigr) \delta(\lambda-K\sigma)\der\lambda +\power{\bsf{Z}}\delta_k.
			\end{equation}
			We have also converted the sum of exponentials with respect to $n\Prime$ into a Kronecker delta~\cite[Sec.~10.4.3]{Schreier2010}.
			Similarly, the sum with respect to $l$ has been treated as a Dirac delta distribution~\cite[Sec.~2.11]{Kennedy2013}.
			Using its sifting property we are left with
			\begin{equation}
				\cyc{j}^{(\frac{k}{K})}(\sigma) = \tfrac{1}{K} \herm{\boldsymbol{\theta}}(\sigma) \mathbf{X}_{j}\Ut\cycmat{\bsf{H}}(K\sigma)\herm{\Ut}\herm{\mathbf{X}_{j}} \boldsymbol{\theta}\bigl(\sigma-\tfrac{k}{K}\bigr) + \power{\bsf{Z}}\delta_k.
			\end{equation}
			
			With the cyclic spectrum, we are able to construct the \acrshort{csm} using~\eqref{eq:csm_def}:
			\begin{equation}
				[\cycmat{j}(\sigma)]_{r,c}= \tfrac{1}{K} \herm{\boldsymbol{\theta}}\bigl(\sigma+\tfrac{r}{K}\bigr) \mathbf{X}_{j}\Ut\cycmat{\bsf{H}}(K\sigma)\herm{\Ut}\herm{\mathbf{X}_{j}} \boldsymbol{\theta}\bigl(\sigma+\tfrac{c}{K}\bigr) + \power{\bsf{Z}}\delta_{r-c},
			\end{equation}
			for $r,c\in\{0,\dots,K-1\}$.
			We have used the fact that $\cycmat{\bsf{H}}(\lambda)$ is periodic in $\lambda$ with period 1, as seen in~\eqref{eq:channel_psd}, so that $\cycmat{\bsf{H}}(K\sigma+r)\equiv\cycmat{\bsf{H}}(K\sigma)$.
	\end{appendices}
	
	\bibliography{sn-bibliography}


\begin{thebibliography}{47}
\ifx \bisbn   \undefined \def \bisbn  #1{ISBN #1}\fi
\ifx \binits  \undefined \def \binits#1{#1}\fi
\ifx \bauthor  \undefined \def \bauthor#1{#1}\fi
\ifx \batitle  \undefined \def \batitle#1{#1}\fi
\ifx \bjtitle  \undefined \def \bjtitle#1{#1}\fi
\ifx \bvolume  \undefined \def \bvolume#1{\textbf{#1}}\fi
\ifx \byear  \undefined \def \byear#1{#1}\fi
\ifx \bissue  \undefined \def \bissue#1{#1}\fi
\ifx \bfpage  \undefined \def \bfpage#1{#1}\fi
\ifx \blpage  \undefined \def \blpage #1{#1}\fi
\ifx \burl  \undefined \def \burl#1{\textsf{#1}}\fi
\ifx \doiurl  \undefined \def \doiurl#1{\url{https://doi.org/#1}}\fi
\ifx \betal  \undefined \def \betal{\textit{et al.}}\fi
\ifx \binstitute  \undefined \def \binstitute#1{#1}\fi
\ifx \binstitutionaled  \undefined \def \binstitutionaled#1{#1}\fi
\ifx \bctitle  \undefined \def \bctitle#1{#1}\fi
\ifx \beditor  \undefined \def \beditor#1{#1}\fi
\ifx \bpublisher  \undefined \def \bpublisher#1{#1}\fi
\ifx \bbtitle  \undefined \def \bbtitle#1{#1}\fi
\ifx \bedition  \undefined \def \bedition#1{#1}\fi
\ifx \bseriesno  \undefined \def \bseriesno#1{#1}\fi
\ifx \blocation  \undefined \def \blocation#1{#1}\fi
\ifx \bsertitle  \undefined \def \bsertitle#1{#1}\fi
\ifx \bsnm \undefined \def \bsnm#1{#1}\fi
\ifx \bsuffix \undefined \def \bsuffix#1{#1}\fi
\ifx \bparticle \undefined \def \bparticle#1{#1}\fi
\ifx \barticle \undefined \def \barticle#1{#1}\fi
\bibcommenthead
\ifx \bconfdate \undefined \def \bconfdate #1{#1}\fi
\ifx \botherref \undefined \def \botherref #1{#1}\fi
\ifx \url \undefined \def \url#1{\textsf{#1}}\fi
\ifx \bchapter \undefined \def \bchapter#1{#1}\fi
\ifx \bbook \undefined \def \bbook#1{#1}\fi
\ifx \bcomment \undefined \def \bcomment#1{#1}\fi
\ifx \oauthor \undefined \def \oauthor#1{#1}\fi
\ifx \citeauthoryear \undefined \def \citeauthoryear#1{#1}\fi
\ifx \endbibitem  \undefined \def \endbibitem {}\fi
\ifx \bconflocation  \undefined \def \bconflocation#1{#1}\fi
\ifx \arxivurl  \undefined \def \arxivurl#1{\textsf{#1}}\fi
\csname PreBibitemsHook\endcsname

\bibitem[\protect\citeauthoryear{Gardner et~al.}{2006}]{Gardner2006}
\begin{barticle}
\bauthor{\bsnm{Gardner}, \binits{W.A.}},
\bauthor{\bsnm{Napolitano}, \binits{A.}},
\bauthor{\bsnm{Paura}, \binits{L.}}:
\batitle{Cyclostationarity: Half a century of research}.
\bjtitle{Signal Processing}
\bvolume{86}(\bissue{4}),
\bfpage{639}--\blpage{697}
(\byear{2006})
\end{barticle}
\endbibitem

\bibitem[\protect\citeauthoryear{Schkoda et~al.}{2014}]{Schkoda2014}
\begin{barticle}
\bauthor{\bsnm{Schkoda}, \binits{R.F.}},
\bauthor{\bsnm{Lund}, \binits{R.B.}},
\bauthor{\bsnm{Wagner}, \binits{J.R.}}:
\batitle{Clustering of cyclostationary signals with applications to climate
  station sitings, eliminations, and merges}.
\bjtitle{IEEE Journal of Selected Topics in Applied Earth Observations and
  Remote Sensing}
\bvolume{7}(\bissue{5}),
\bfpage{1754}--\blpage{1762}
(\byear{2014})
\end{barticle}
\endbibitem

\bibitem[\protect\citeauthoryear{Broszkiewicz-Suwaj
  et~al.}{2004}]{BroszkiewiczSuwaj2004}
\begin{barticle}
\bauthor{\bsnm{Broszkiewicz-Suwaj}, \binits{E.}},
\bauthor{\bsnm{Makagon}, \binits{A.}},
\bauthor{\bsnm{Weron}, \binits{R.}},
\bauthor{\bsnm{Wyłomańska}, \binits{A.}}:
\batitle{On detecting and modeling periodic correlation in financial data}.
\bjtitle{Physica A: Statistical Mechanics and its Applications}
\bvolume{336}(\bissue{1–2}),
\bfpage{196}--\blpage{205}
(\byear{2004})
\end{barticle}
\endbibitem

\bibitem[\protect\citeauthoryear{Hellbourg et~al.}{2011}]{Hellbourg2011}
\begin{barticle}
\bauthor{\bsnm{Hellbourg}, \binits{G.}},
\bauthor{\bsnm{Weber}, \binits{R.}},
\bauthor{\bsnm{Capdessus}, \binits{C.}},
\bauthor{\bsnm{Boonstra}, \binits{A.-J.}}:
\batitle{Cyclostationary approaches for spatial {RFI} mitigation in radio
  astronomy}.
\bjtitle{Comptes Rendus. Physique}
\bvolume{13}(\bissue{1}),
\bfpage{71}--\blpage{79}
(\byear{2011})
\end{barticle}
\endbibitem

\bibitem[\protect\citeauthoryear{Arora et~al.}{2008}]{Arora2008}
\begin{barticle}
\bauthor{\bsnm{Arora}, \binits{R.}},
\bauthor{\bsnm{Sethares}, \binits{W.A.}},
\bauthor{\bsnm{Bucklew}, \binits{J.A.}}:
\batitle{Latent periodicities in genome sequences}.
\bjtitle{IEEE Journal of Selected Topics in Signal Processing}
\bvolume{2}(\bissue{3}),
\bfpage{332}--\blpage{342}
(\byear{2008})
\end{barticle}
\endbibitem

\bibitem[\protect\citeauthoryear{Napolitano}{2013}]{Napolitano2013}
\begin{barticle}
\bauthor{\bsnm{Napolitano}, \binits{A.}}:
\batitle{Generalizations of cyclostationarity: A new paradigm for signal
  processing for mobile communications, radar, and sonar}.
\bjtitle{IEEE Signal Processing Magazine}
\bvolume{30}(\bissue{6}),
\bfpage{53}--\blpage{63}
(\byear{2013})
\end{barticle}
\endbibitem

\bibitem[\protect\citeauthoryear{Gardner}{1988}]{Gardner1988}
\begin{bbook}
\bauthor{\bsnm{Gardner}, \binits{W.A.}}:
\bbtitle{Statistical Spectral Analysis},
\bedition{1\textsuperscript{st}} edn.
\bsertitle{Prentice Hall Information and System Sciences Series}.
\bpublisher{Prentice-Hall, Inc.},
\blocation{Englewood Cliffs, NJ, USA}
(\byear{1988})
\end{bbook}
\endbibitem

\bibitem[\protect\citeauthoryear{Napolitano}{2016a}]{Napolitano2016}
\begin{barticle}
\bauthor{\bsnm{Napolitano}, \binits{A.}}:
\batitle{Cyclostationarity: New trends and applications}.
\bjtitle{Signal Processing}
\bvolume{120},
\bfpage{385}--\blpage{408}
(\byear{2016})
\end{barticle}
\endbibitem

\bibitem[\protect\citeauthoryear{Napolitano}{2016b}]{Napolitano2016a}
\begin{barticle}
\bauthor{\bsnm{Napolitano}, \binits{A.}}:
\batitle{Cyclostationarity: Limits and generalizations}.
\bjtitle{Signal Processing}
\bvolume{120},
\bfpage{323}--\blpage{347}
(\byear{2016})
\end{barticle}
\endbibitem

\bibitem[\protect\citeauthoryear{Antoni et~al.}{2004}]{Antoni2004}
\begin{barticle}
\bauthor{\bsnm{Antoni}, \binits{J.}},
\bauthor{\bsnm{Bonnardot}, \binits{F.}},
\bauthor{\bsnm{Raad}, \binits{A.}},
\bauthor{\bsnm{El~Badaoui}, \binits{M.}}:
\batitle{Cyclostationary modelling of rotating machine vibration signals}.
\bjtitle{Mechanical Systems and Signal Processing}
\bvolume{18}(\bissue{6}),
\bfpage{1285}--\blpage{1314}
(\byear{2004})
\end{barticle}
\endbibitem

\bibitem[\protect\citeauthoryear{Gardner}{1994}]{Gardner1994}
\begin{bbook}
\bauthor{\bsnm{Gardner}, \binits{W.A.}}:
\bbtitle{Cyclostationarity in Communications and Signal Processing}.
\bpublisher{IEEE Press},
\blocation{Piscataway, NJ, USA}
(\byear{1994})
\end{bbook}
\endbibitem

\bibitem[\protect\citeauthoryear{Shi et~al.}{2009}]{Shi2009}
\begin{barticle}
\bauthor{\bsnm{Shi}, \binits{M.}},
\bauthor{\bsnm{Bar-Ness}, \binits{Y.}},
\bauthor{\bsnm{Su}, \binits{W.}}:
\batitle{Revisiting the timing and frequency offset estimation based on
  cyclostationarity with new improved method}.
\bjtitle{IEEE Communications Letters}
\bvolume{13}(\bissue{7}),
\bfpage{537}--\blpage{539}
(\byear{2009})
\end{barticle}
\endbibitem

\bibitem[\protect\citeauthoryear{Horstmann et~al.}{2018}]{Horstmann2018}
\begin{barticle}
\bauthor{\bsnm{Horstmann}, \binits{S.}},
\bauthor{\bsnm{Ramirez}, \binits{D.}},
\bauthor{\bsnm{Schreier}, \binits{P.J.}}:
\batitle{Joint detection of almost-cyclostationary signals and estimation of
  their cycle period}.
\bjtitle{IEEE Signal Processing Letters}
\bvolume{25}(\bissue{11}),
\bfpage{1695}--\blpage{1699}
(\byear{2018})
\end{barticle}
\endbibitem

\bibitem[\protect\citeauthoryear{Tong et~al.}{1995}]{Tong1995}
\begin{barticle}
\bauthor{\bsnm{Tong}, \binits{L.}},
\bauthor{\bsnm{Xu}, \binits{G.}},
\bauthor{\bsnm{Hassibi}, \binits{B.}},
\bauthor{\bsnm{Kailath}, \binits{T.}}:
\batitle{Blind channel identification based on second-order statistics: a
  frequency-domain approach}.
\bjtitle{IEEE Transactions on Information Theory}
\bvolume{41}(\bissue{1}),
\bfpage{329}--\blpage{334}
(\byear{1995})
\end{barticle}
\endbibitem

\bibitem[\protect\citeauthoryear{Riba et~al.}{2010}]{Riba2010}
\begin{barticle}
\bauthor{\bsnm{Riba}, \binits{J.}},
\bauthor{\bsnm{Villares}, \binits{J.}},
\bauthor{\bsnm{Vázquez}, \binits{G.}}:
\batitle{A nondata-aided {SNR} estimation technique for multilevel modulations
  exploiting signal cyclostationarity}.
\bjtitle{IEEE Transactions on Signal Processing}
\bvolume{58}(\bissue{11}),
\bfpage{5767}--\blpage{5778}
(\byear{2010})
\end{barticle}
\endbibitem

\bibitem[\protect\citeauthoryear{Pries et~al.}{2018}]{Pries2018}
\begin{barticle}
\bauthor{\bsnm{Pries}, \binits{A.}},
\bauthor{\bsnm{Ramirez}, \binits{D.}},
\bauthor{\bsnm{Schreier}, \binits{P.J.}}:
\batitle{{LMPIT}-inspired tests for detecting a cyclostationary signal in noise
  with spatio–temporal structure}.
\bjtitle{IEEE Transactions on Wireless Communications}
\bvolume{17}(\bissue{9}),
\bfpage{6321}--\blpage{6334}
(\byear{2018})
\end{barticle}
\endbibitem

\bibitem[\protect\citeauthoryear{Chafii et~al.}{2023}]{Chafii2023}
\begin{barticle}
\bauthor{\bsnm{Chafii}, \binits{M.}},
\bauthor{\bsnm{Bariah}, \binits{L.}},
\bauthor{\bsnm{Muhaidat}, \binits{S.}},
\bauthor{\bsnm{Debbah}, \binits{M.}}:
\batitle{Twelve scientific challenges for {6G}: Rethinking the foundations of
  communications theory}.
\bjtitle{IEEE Communications Surveys \& Tutorials}
\bvolume{25}(\bissue{2}),
\bfpage{868}--\blpage{904}
(\byear{2023})
\end{barticle}
\endbibitem

\bibitem[\protect\citeauthoryear{Jing et~al.}{2016}]{Jing2016}
\begin{barticle}
\bauthor{\bsnm{Jing}, \binits{L.}},
\bauthor{\bsnm{Carvalho}, \binits{E.D.}},
\bauthor{\bsnm{Popovski}, \binits{P.}},
\bauthor{\bsnm{Martinez}, \binits{A.O.}}:
\batitle{Design and performance analysis of noncoherent detection systems with
  massive receiver arrays}.
\bjtitle{{IEEE} Transactions on Signal Processing}
\bvolume{64}(\bissue{19}),
\bfpage{5000}--\blpage{5010}
(\byear{2016})
\end{barticle}
\endbibitem

\bibitem[\protect\citeauthoryear{Marzetta}{2018}]{Marzetta2018}
\begin{bchapter}
\bauthor{\bsnm{Marzetta}, \binits{T.L.}}:
\bctitle{Spatially-stationary propagating random field model for massive {MIMO}
  small-scale fading}.
In: \bbtitle{2018 {IEEE} International Symposium on Information Theory
  ({ISIT})}.
\bpublisher{{IEEE}},
\blocation{Vail, CO, USA}
(\byear{2018})
\end{bchapter}
\endbibitem

\bibitem[\protect\citeauthoryear{Weichselberger
  et~al.}{2006}]{Weichselberger2006}
\begin{barticle}
\bauthor{\bsnm{Weichselberger}, \binits{W.}},
\bauthor{\bsnm{Herdin}, \binits{M.}},
\bauthor{\bsnm{Ozcelik}, \binits{H.}},
\bauthor{\bsnm{Bonek}, \binits{E.}}:
\batitle{A stochastic {MIMO} channel model with joint correlation of both link
  ends}.
\bjtitle{IEEE Transactions on Wireless Communications}
\bvolume{5}(\bissue{1}),
\bfpage{90}--\blpage{100}
(\byear{2006})
\end{barticle}
\endbibitem

\bibitem[\protect\citeauthoryear{Vilà-Insa and Riba}{2024}]{VilaInsa2024b}
\begin{bchapter}
\bauthor{\bsnm{Vilà-Insa}, \binits{M.}},
\bauthor{\bsnm{Riba}, \binits{J.}}:
\bctitle{A cyclostationary perspective on noncoherent {SIMO} communications}.
In: \bbtitle{2024 IEEE 25\textsuperscript{th} International Workshop on Signal
  Processing Advances in Wireless Communications (SPAWC)},
pp. \bfpage{71}--\blpage{75}.
\bpublisher{IEEE},
\blocation{Lucca, Italy}
(\byear{2024})
\end{bchapter}
\endbibitem

\bibitem[\protect\citeauthoryear{Riba and Vila}{2022}]{Riba2022}
\begin{barticle}
\bauthor{\bsnm{Riba}, \binits{J.}},
\bauthor{\bsnm{Vila}, \binits{M.}}:
\batitle{On infinite past predictability of cyclostationary signals}.
\bjtitle{{IEEE} Signal Processing Letters}
\bvolume{29},
\bfpage{647}--\blpage{651}
(\byear{2022})
\end{barticle}
\endbibitem

\bibitem[\protect\citeauthoryear{Vilà-Insa and Riba}{2024}]{VilaInsa2024}
\begin{botherref}
\oauthor{\bsnm{Vilà-Insa}, \binits{M.}},
\oauthor{\bsnm{Riba}, \binits{J.}}:
Asymptotic analysis of synchronous signal processing
(2024)
{\href{https://arxiv.org/abs/2403.18445}{{arXiv:2403.18445}}}
{[eess.SP]}
\end{botherref}
\endbibitem

\bibitem[\protect\citeauthoryear{Lapidoth}{2017}]{Lapidoth2017}
\begin{bbook}
\bauthor{\bsnm{Lapidoth}, \binits{A.}}:
\bbtitle{A Foundation in Digital Communication}.
\bpublisher{Cambridge University Press},
\blocation{Cambridge, United Kingdom}
(\byear{2017})
\end{bbook}
\endbibitem

\bibitem[\protect\citeauthoryear{Stoica and Nehorai}{1990}]{Stoica1990}
\begin{barticle}
\bauthor{\bsnm{Stoica}, \binits{P.}},
\bauthor{\bsnm{Nehorai}, \binits{A.}}:
\batitle{Performance study of conditional and unconditional
  direction-of-arrival estimation}.
\bjtitle{{IEEE} Transactions on Acoustics, Speech, and Signal Processing}
\bvolume{38}(\bissue{10}),
\bfpage{1783}--\blpage{1795}
(\byear{1990})
\end{barticle}
\endbibitem

\bibitem[\protect\citeauthoryear{Vilà-Insa and Riba}{2025}]{VilaInsa2025}
\begin{barticle}
\bauthor{\bsnm{Vilà-Insa}, \binits{M.}},
\bauthor{\bsnm{Riba}, \binits{J.}}:
\batitle{Singular detection in noncoherent communications}.
\bjtitle{IEEE Wireless Communications Letters}
\bvolume{14}(\bissue{4}),
\bfpage{1164}--\blpage{1168}
(\byear{2025})
\end{barticle}
\endbibitem

\bibitem[\protect\citeauthoryear{Feng et~al.}{2022}]{Feng2022}
\begin{barticle}
\bauthor{\bsnm{Feng}, \binits{R.}},
\bauthor{\bsnm{Wang}, \binits{C.-X.}},
\bauthor{\bsnm{Huang}, \binits{J.}},
\bauthor{\bsnm{Gao}, \binits{X.}},
\bauthor{\bsnm{Salous}, \binits{S.}},
\bauthor{\bsnm{Haas}, \binits{H.}}:
\batitle{Classification and comparison of massive {MIMO} propagation channel
  models}.
\bjtitle{IEEE Internet of Things Journal}
\bvolume{9}(\bissue{23}),
\bfpage{23452}--\blpage{23471}
(\byear{2022})
\end{barticle}
\endbibitem

\bibitem[\protect\citeauthoryear{Heath~Jr. and Lozano}{2018}]{Heath2018}
\begin{bbook}
\bauthor{\bsnm{Heath~Jr.}, \binits{R.W.}},
\bauthor{\bsnm{Lozano}, \binits{A.}}:
\bbtitle{Foundations of MIMO Communication},
\bedition{1\textsuperscript{st}} edn.
\bpublisher{Cambridge University Press},
\blocation{Cambridge, United Kingdom}
(\byear{2018})
\end{bbook}
\endbibitem

\bibitem[\protect\citeauthoryear{Sanguinetti et~al.}{2020}]{Sanguinetti2020}
\begin{barticle}
\bauthor{\bsnm{Sanguinetti}, \binits{L.}},
\bauthor{\bsnm{Bjornson}, \binits{E.}},
\bauthor{\bsnm{Hoydis}, \binits{J.}}:
\batitle{Toward massive {MIMO} 2.0: Understanding spatial correlation,
  interference suppression, and pilot contamination}.
\bjtitle{{IEEE} Transactions on Communications}
\bvolume{68}(\bissue{1}),
\bfpage{232}--\blpage{257}
(\byear{2020})
\end{barticle}
\endbibitem

\bibitem[\protect\citeauthoryear{Gray}{2006}]{Gray2005}
\begin{barticle}
\bauthor{\bsnm{Gray}, \binits{R.M.}}:
\batitle{Toeplitz and circulant matrices: A review}.
\bjtitle{Foundations and Trends® in Communications and Information Theory}
\bvolume{2}(\bissue{3}),
\bfpage{155}--\blpage{239}
(\byear{2006})
\end{barticle}
\endbibitem

\bibitem[\protect\citeauthoryear{Vaidyanathan}{2008}]{Vaidyanathan2008}
\begin{bbook}
\bauthor{\bsnm{Vaidyanathan}, \binits{P.P.}}:
\bbtitle{The Theory of Linear Prediction}.
\bsertitle{Synthesis Lectures on Signal Processing},
vol. \bseriesno{3}.
\bpublisher{Springer},
\blocation{San Rafael, CA, USA}
(\byear{2008})
\end{bbook}
\endbibitem

\bibitem[\protect\citeauthoryear{Zhang et~al.}{2020}]{Zhang2020}
\begin{barticle}
\bauthor{\bsnm{Zhang}, \binits{J.}},
\bauthor{\bsnm{Bjornson}, \binits{E.}},
\bauthor{\bsnm{Matthaiou}, \binits{M.}},
\bauthor{\bsnm{Ng}, \binits{D.W.K.}},
\bauthor{\bsnm{Yang}, \binits{H.}},
\bauthor{\bsnm{Love}, \binits{D.J.}}:
\batitle{Prospective multiple antenna technologies for beyond {5G}}.
\bjtitle{IEEE Journal on Selected Areas in Communications}
\bvolume{38}(\bissue{8}),
\bfpage{1637}--\blpage{1660}
(\byear{2020})
\end{barticle}
\endbibitem

\bibitem[\protect\citeauthoryear{Ramirez et~al.}{2015}]{Ramirez2015}
\begin{barticle}
\bauthor{\bsnm{Ramirez}, \binits{D.}},
\bauthor{\bsnm{Schreier}, \binits{P.J.}},
\bauthor{\bsnm{Via}, \binits{J.}},
\bauthor{\bsnm{Santamaria}, \binits{I.}},
\bauthor{\bsnm{Scharf}, \binits{L.L.}}:
\batitle{Detection of multivariate cyclostationarity}.
\bjtitle{{IEEE} Transactions on Signal Processing}
\bvolume{63}(\bissue{20}),
\bfpage{5395}--\blpage{5408}
(\byear{2015})
\end{barticle}
\endbibitem

\bibitem[\protect\citeauthoryear{Schreier and Scharf}{2010}]{Schreier2010}
\begin{bbook}
\bauthor{\bsnm{Schreier}, \binits{P.J.}},
\bauthor{\bsnm{Scharf}, \binits{L.L.}}:
\bbtitle{Statistical Signal Processing of Complex-Valued Data}.
\bpublisher{Cambridge University Press},
\blocation{Cambridge, United Kingdom}
(\byear{2010})
\end{bbook}
\endbibitem

\bibitem[\protect\citeauthoryear{Kennedy and Sadeghi}{2013}]{Kennedy2013}
\begin{bbook}
\bauthor{\bsnm{Kennedy}, \binits{R.A.}},
\bauthor{\bsnm{Sadeghi}, \binits{P.}}:
\bbtitle{Hilbert Space Methods in Signal Processing}.
\bpublisher{Cambridge University Press},
\blocation{Cambridge, United Kingdom}
(\byear{2013})
\end{bbook}
\endbibitem

\bibitem[\protect\citeauthoryear{Widom}{1974}]{Widom1974}
\begin{barticle}
\bauthor{\bsnm{Widom}, \binits{H.}}:
\batitle{Asymptotic behavior of block {Toeplitz} matrices and determinants}.
\bjtitle{Advances in Mathematics}
\bvolume{13}(\bissue{3}),
\bfpage{284}--\blpage{322}
(\byear{1974})
\end{barticle}
\endbibitem

\bibitem[\protect\citeauthoryear{Hlawatsch}{2011}]{Hlawatsch2011}
\begin{bbook}
\bauthor{\bsnm{Hlawatsch}, \binits{F.}}:
\bbtitle{Wireless Communications over Rapidly Time-Varying Channels}.
\bpublisher{Elsevier Science \& Technology},
\blocation{San Diego, CA, USA}
(\byear{2011})
\end{bbook}
\endbibitem

\bibitem[\protect\citeauthoryear{Levy}{2008}]{Levy2008}
\begin{bbook}
\bauthor{\bsnm{Levy}, \binits{B.C.}}:
\bbtitle{Principles of Signal Detection and Parameter Estimation}.
\bpublisher{Springer},
\blocation{New York, NY, USA}
(\byear{2008})
\end{bbook}
\endbibitem

\bibitem[\protect\citeauthoryear{Yang et~al.}{2023}]{Yang2023}
\begin{botherref}
\oauthor{\bsnm{Yang}, \binits{T.}},
\oauthor{\bsnm{Barzegar~Khalilsarai}, \binits{M.}},
\oauthor{\bsnm{Haghighatshoar}, \binits{S.}},
\oauthor{\bsnm{Caire}, \binits{G.}}:
Structured channel covariance estimation from limited samples for large antenna
  arrays.
EURASIP Journal on Wireless Communications and Networking
\textbf{2023}(1)
(2023)
\end{botherref}
\endbibitem

\bibitem[\protect\citeauthoryear{Kim et~al.}{2008}]{Kim2008}
\begin{barticle}
\bauthor{\bsnm{Kim}, \binits{T.}},
\bauthor{\bsnm{Bengtsson}, \binits{M.}},
\bauthor{\bsnm{Larsson}, \binits{E.}},
\bauthor{\bsnm{Skoglund}, \binits{M.}}:
\batitle{Transactions letters - {Combining} long-term and low-rate short-term
  channel state information over correlated {MIMO} channels}.
\bjtitle{IEEE Transactions on Wireless Communications}
\bvolume{7}(\bissue{7}),
\bfpage{2409}--\blpage{2414}
(\byear{2008})
\end{barticle}
\endbibitem

\bibitem[\protect\citeauthoryear{Kay}{1988}]{Kay1988}
\begin{bbook}
\bauthor{\bsnm{Kay}, \binits{S.M.}}:
\bbtitle{Modern Spectral Estimation}.
\bsertitle{Prentice-Hall Signal Processing Series}.
\bpublisher{Prentice Hall},
\blocation{Englewood Cliffs, NJ, USA}
(\byear{1988})
\end{bbook}
\endbibitem

\bibitem[\protect\citeauthoryear{Vetterli et~al.}{2014}]{Vetterli2014}
\begin{bbook}
\bauthor{\bsnm{Vetterli}, \binits{M.}},
\bauthor{\bsnm{Kovačević}, \binits{J.}},
\bauthor{\bsnm{Goyal}, \binits{V.K.}}:
\bbtitle{Foundations of Signal Processing}.
\bpublisher{Cambridge University Press},
\blocation{Cambridge, United Kingdom}
(\byear{2014})
\end{bbook}
\endbibitem

\bibitem[\protect\citeauthoryear{Mezzadri}{2007}]{Mezzadri2007}
\begin{barticle}
\bauthor{\bsnm{Mezzadri}, \binits{F.}}:
\batitle{How to generate random matrices from the classical compact groups}.
\bjtitle{Notices of the American Mathematical Society}
\bvolume{54}(\bissue{5}),
\bfpage{592}--\blpage{604}
(\byear{2007})
\end{barticle}
\endbibitem

\bibitem[\protect\citeauthoryear{Choudary and Niculescu}{2014}]{Choudary2014}
\begin{bbook}
\bauthor{\bsnm{Choudary}, \binits{A.D.R.}},
\bauthor{\bsnm{Niculescu}, \binits{C.P.}}:
\bbtitle{Real Analysis on Intervals},
\bedition{1}st edn.
\bsertitle{Springer eBook Collection}.
\bpublisher{Springer},
\blocation{New Delhi, India}
(\byear{2014})
\end{bbook}
\endbibitem

\bibitem[\protect\citeauthoryear{Marshall et~al.}{2011}]{Marshall2011}
\begin{bbook}
\bauthor{\bsnm{Marshall}, \binits{A.W.}},
\bauthor{\bsnm{Olkin}, \binits{I.}},
\bauthor{\bsnm{Arnold}, \binits{B.C.}}:
\bbtitle{Inequalities: Theory of Majorization and Its Applications}.
\bpublisher{Springer},
\blocation{New York, NY, USA}
(\byear{2011})
\end{bbook}
\endbibitem

\bibitem[\protect\citeauthoryear{Zheng and Tse}{2002}]{Zheng2002}
\begin{barticle}
\bauthor{\bsnm{Zheng}, \binits{L.}},
\bauthor{\bsnm{Tse}, \binits{D.N.C.}}:
\batitle{Communication on the {Grassmann} manifold: a geometric approach to the
  noncoherent multiple-antenna channel}.
\bjtitle{{IEEE} Transactions on Information Theory}
\bvolume{48}(\bissue{2}),
\bfpage{359}--\blpage{383}
(\byear{2002})
\end{barticle}
\endbibitem

\bibitem[\protect\citeauthoryear{Zhang et~al.}{2010}]{Zhang2010}
\begin{barticle}
\bauthor{\bsnm{Zhang}, \binits{W.}},
\bauthor{\bsnm{Poor}, \binits{H.V.}},
\bauthor{\bsnm{Quan}, \binits{Z.}}:
\batitle{Frequency-domain correlation: An asymptotically optimum approximation
  of quadratic likelihood ratio detectors}.
\bjtitle{IEEE Transactions on Signal Processing}
\bvolume{58}(\bissue{3}),
\bfpage{969}--\blpage{979}
(\byear{2010})
\end{barticle}
\endbibitem

\end{thebibliography}

\end{document}